\def\url@leostyle{%
 \@ifundefined{selectfont}{\def\UrlFont{\sf}}{\def\UrlFont{\scriptsize\ttfamily}}} \makeatother\urlstyle{leo}
\newtheorem{theorem}{Theorem}
\newtheorem{lemma}[theorem]{Lemma}
\theoremstyle{definition}
\newtheorem{example}[theorem]{Example}
\theoremstyle{remark}
\newtheorem{remark}[theorem]{Remark}
\numberwithin{equation}{section}
\numberwithin{theorem}{section}
\definecolor{Red}{rgb}{0.9,0,0.0}
\definecolor{Blue}{rgb}{0,0.0,1.0}
\def\cA{\mathcal{A}}
\def\cD{\mathcal{D}}
\def\cG{\mathcal{G}}
\def\cI{\mathcal{I}}
\def\cQ{\mathcal{Q}}
\def\cR{\mathcal{R}}
\def\cT{\mathcal{T}}
\def\bR{\mathbb{R}}
\def\sF{\mathscr{F}}
\def\sG{\mathscr{G}}
\def\mP{\mathsf{P}}
\newcommand{\1}{\mathbbm{1}}            % preferable way of writing indicator function
\newcommand{\set}[1]{\{#1\}}            % set: {xyz} to be used for inline formulas
\newcommand{\Set}[1]{\left\{#1\right\}} % set: {xyz} to be used for seapare (not inline) formulas
\renewcommand{\mid}{\;|\;}              % mid bar with small spaces before and after: x | y
\newcommand{\Mid}{\;\Big | \;}          % big bar with small spaces before and after:
\DeclareMathOperator*{\esssup}{ess\,sup} % ess sup
\DeclareMathOperator*{\essinf}{ess\,inf} % ess inf
\DeclareMathOperator{\var}{\mathrm{V}@\mathrm{R}}           % \V@R Value-at-risk
\DeclareMathOperator{\avar}{\mathrm{AV}@\mathrm{R}}         % \AV@R average Value-at-risk
\newglossaryentry{T}{name={\ensuremath{T}}, description={relevant time horizon for CCP}}
\newglossaryentry{I}{name={\ensuremath{I}}, description={number of CMs in CCP}}
\newglossaryentry{deltaf}{name={\ensuremath{\delta_f}}, description={fundamental unit of time. Also, time between IM and VM calls}}
\newglossaryentry{delta}{name={\ensuremath{\delta}}, description={margin period of risk}}
\newglossaryentry{beta}{name={\ensuremath{\beta}}, description={discount factor}}
\newglossaryentry{deltac}{name={\ensuremath{\delta_c}}, description={$:=n\delta_f+\delta$, default time of CCP}}
\newglossaryentry{tk}{name={\ensuremath{ t_k}}, description={$:=k\delta_f, \, k=0,1,\ldots$. }}
\newglossaryentry{DeltaF}{name={\ensuremath{\Delta_f}}, description={time between DF calls}}
\newglossaryentry{Tj}{name={ \ensuremath{T_j}}, description={$:=j\Delta_f, \, j=0,1,\ldots$}}
\newglossaryentry{cTf}{name={\ensuremath{\mathcal{T}^f}}, description={$:=\set{t_0,t_1,\ldots,T-\delta_f}$}. IM and VM call times}
\newglossaryentry{cT}{name={\ensuremath{\mathcal{T}}}, description={$:=\set{0,\delta_f, 2\delta_f,\ldots, T-\delta_f,T}$}}
\newglossaryentry{cTF}{name={\ensuremath{\mathcal{T}^F}}, description={$:=\set{T_0,T_1,\ldots,T-\delta_f}$}. DF call times}
\newglossaryentry{cTdelta}{name={\ensuremath{\mathcal{T}^\delta}},
                                description={$:=\set{t_1+\delta,t_2+\delta,\ldots}$}. Unfunded DF calls (if needed) times}
\newglossaryentry{Vti}{name={\ensuremath{V_t^i}}, description={nominal (MTM) portfolio value at time $t$ of the CM$^i$}}
\newglossaryentry{Di}{name={\ensuremath{D^i}}, description={dividend payment associated with the portfolio of the CM$^i$}}
\newglossaryentry{DFi}{name={\ensuremath{\text{DF}^i}}, description={the DF contribution of the CM$^i$}}
\newglossaryentry{VMitk}{name={\ensuremath{\textrm{VM}^i_{t_k}}}, description={variation margin of the $i$ member at time $t_k$}}
\newglossaryentry{IMitk}{name={\ensuremath{\textrm{IM}^i_{t_k}}}, description={initial margin of the CM$^i$ at time $t_k$}}
\newglossaryentry{Xitk}{name={\ensuremath{X^i_{t_k}}}, description={cash flow based on which the $\textrm{IM}^i_{t_k}$ is computed}}
\newglossaryentry{rho}{name={\ensuremath{\rho}}, description={dynamic risk measure used for compuations of IM}}
\newglossaryentry{eta}{name={\ensuremath{\eta}}, description={dynamic risk measure used for compuations of total pre-funded DF}}
\newglossaryentry{taui}{name={\ensuremath{\tau_i}}, description={default time of the CM$^i$}}
\newglossaryentry{hatVi}{name={\ensuremath{\widehat V^i}},
                                description={the recovered value  of the the portfolio of the CM$^i$ by means of liquidation or auctioning}}
\newglossaryentry{EPi}{name={\ensuremath{\textrm{EP}^i}}, description={net exposure of the CCP to the default of CM$^i$}}
\newglossaryentry{EPiReg}{name={\ensuremath{\textrm{EP}^{i,\textrm{reg}}}}, description={regulatory net exposure of the CCP to the default of CM$^i$}}
\newglossaryentry{CMi}{name={\ensuremath{\textrm{CM}^i}}, description={the $i$th clearing member}}
\newglossaryentry{Tdelta}{name={\ensuremath{T^\delta}}, description={latest possible time for calling uDF}}
\newglossaryentry{SG}{name={\ensuremath{\textrm{SG}}}, description={CCP equity or skin-in-the-game}}
\newglossaryentry{uDFi}{name={\ensuremath{\text{uDF}^i}}, description={Unfunded Default Fund of \gls{CMi}}}
\newglossaryentry{EL}{name={\ensuremath{\text{EL}}}, description={Effective CCP loss}}
\newglossaryentry{Ri}{name={\ensuremath{R^i}}, description={Credit ratings process of \gls{CMi}}}
\newglossaryentry{cRi}{name={\ensuremath{\cR^i}}, description={$:=\set{1,\ldots,K^i}$. The state space of the credit ratings process \gls{Ri}}}
\newglossaryentry{mPi}{name={\ensuremath{\mP^i}}, description={Transition matrix of process $R^i$}}
\newglossaryentry{mP}{name={\ensuremath{\mP}}, description={Transition matrix of process $R$}}
\newglossaryentry{R}{name={\ensuremath{R}}, description={process modeling the joint evolution of credit ratings of all CMs}}
\newglossaryentry{mPiy}{name={\ensuremath{\mP^{i,y}}}, description={One year transition matrix for credit ratings of \gls{CMi}}}
\newacronym{OTC}{OTC}{Over The Counter}
\newacronym[longplural={Central Clearing Parties}]{CCP}{CCP}{Central Clearing Party}
\newacronym{IM}{IM}{Initial Margin}
\newacronym{VM}{VM}{Variation Margin}
\newacronym{DF}{DF}{Default Fund}
\newacronym{CM}{CM}{Clearing Member}
\newacronym{MtM}{MtM}{Mark-to-Market}
\newacronym{VaR}{\ensuremath{\var}}{Value at Risk}
\newacronym{AVaR}{\ensuremath{\avar}}{Average Value at Risk}
\newacronym{ES}{ES}{Expected Shortfall}
\newacronym{EMIR}{EMIR}{European Market Infrastructure Regulation}
\newacronym{CDS}{CDS}{Credit Default Swap}
\newacronym{IRS}{IRS}{Interest Rate Swap}
\newacronym{BCBS}{BCBS}{Basel Committee on Banking Supervision}
\newacronym{uDF}{uDF}{Unfunded Default Fund}
\title{A Dynamic Model of Central Counterparty Risk}
\def\and{%
  \end{tabular}%
  \begin{tabular}[t]{c}}%
\def\@fnsymbol#1{\ensuremath{\ifcase#1\or a\or b\or c\or
   d\or e\or f\or g\or h\or i\else\@ctrerr\fi}}
\author{
Tomasz R. Bielecki\,\thanks{Department of Applied Mathematics, Illinois Institute of Technology
       \newline \hspace*{1.45em}  10 W 32nd Str, Building REC, Room 208, Chicago, IL 60616, USA
       \newline \hspace*{1.45em}  Emails: \url{tbielecki@iit.edu} (TRB), \url{cialenco@iit.edu} (IC), and \url{sfeng10@hawk.iit.edu} (SF)
       \newline \hspace*{1.45em}  URLs: \url{http://math.iit.edu/\~bielecki} (TRB), and \url{http://math.iit.edu/\~igor} (IC)
        \vspace{0.5em}} ,
\and
         Igor Cialenco,\,\footnotemark[1] \newline
\and
        Shibi Feng,\,\footnotemark[1] \newline
%\and
%        Samim Ghamami\,\thanks{Office of Financial Research, US Department of the Treasury
%        \newline \hspace*{1.45em} 717 14th Street, NW, Washington, DC 20220, USA
%        \newline \hspace*{1.45em} Email: \url{Samim.Ghamami@ofr.treasury.gov}
%         \vspace{0.5em}}
%        % if the same address, then use as in previous example
}        %}
\date{ \small  %\today\\ Preliminary Draft}} %
First Circulated: February 18, 2018
% This Version:
}
\begin{document}

\maketitle

{\footnotesize
\begin{tabular}{l@{} p{350pt}}
  \hline \\[-.2em]
  \textsc{Abstract}: \ &
We introduce a dynamic model of the default waterfall of derivatives CCPs and propose a risk sensitive method for sizing the initial margin (IM), and the default fund (DF) and its allocation among clearing members. Using a Markovian structure model of joint credit migrations, our evaluation of DF takes into account the joint credit quality of clearing members as they evolve over time. Another important aspect of the proposed methodology is the use of the time consistent dynamic risk measures for computation of IM and DF.
We carry out a comprehensive numerical study, where, in particular, we analyze the advantages of the proposed methodology and its comparison with the currently prevailing methods used in industry.
\\[0.5em]
\textsc{Keywords:} \ &  Central Clearing Parties; CCP; default fund; initial margin; dynamic risk measure; Markov structures; CDS; default waterfall; credit migration\\
%\textsc{MSC2010:} \ &  \\[1em]
\textsc{JEL Codes:} \ & G01, G18, G20, G23, G28 \\[1em]
  \hline
\end{tabular}
}

\bigskip

%\tableofcontents

\section{Introduction}

Following the aftermath of the financial crisis of 2008, starting with 2009 G-20 clearing mandate,  regulators across the globe required that standardized \gls{OTC} derivative contracts are cleared by the \glspl{CCP}. In US the reforms have been implemented through  Dodd--Frank Wall Street Reform and Consumer Protection Act~\cite{DoodFrank} (see also \cite{FSB2014}), and in Europe through \gls{EMIR}~\cite{EMIR2012}. As such, the entire landscape of financial markets has been changing, and CCPs are now recognized, by practitioners and regulators, to be a vital and critical element of the financial systems. It is well recognized that understanding how CCPs and their structure will influence the financial markets and stability of the financial system represents one of the key  challenges faced today by the market participants.
In a nutshell, the CCPs are intended to mitigate the counterparty risk, increase transparency, facilitate regulatory access to the necessary data, protect against market abuse, and avoid contagion if one of the (large) financial institution defaults; for more details on how CCPs can increase the safety and integrity of financial markets see for instance  \cite{EUREX2014}.

A CCP brings market together, and acts as a buyer for every seller, and a seller for every buyer. Every OTC contract\footnote{For example, the \gls{CDS})  are cleared by CME and ICE Clear,  \gls{IRS} are cleared by LCH.Clearnet and CME.} between two counterparties is replaced (through the novation process) by two contracts between CCP and each counterparty, such that one contract offsets the other. If one counterparty defaults, then the other is protected by the \textit{default management procedure} and resources of CCP, as described bellow. This means that CCP is exposed to the risk that one of the counterparties will default. Hence, the CCP becomes an important systemic element of the modern financial industry, and it is critically important that the CCP manages appropriately the risk.

The literature regarding various aspects of CCPs activities, in particular the risk management methodologies, has been rapidly growing in the recent years. We refer the reader to some recent studies, e.g., \cite{ArmentiCrepey2015,Arnold2017,BeliVaaradi2017,CapponiEtAl2017,ChengPhd2017,CuiEtAl2017,Deng2017,YoungPaddrik2017,GhamamiGlasserman2016}, and the references therein.

The main goal of this paper is to develop a new methodology for computing some of the ingredients of the CCP's default waterfall, in a dynamic framework, and to test it via a numerical study.
In particular, we propose a novel, risk sensitive  method for computing the total default fund of CCP.
Risk sensitivity amounts to accounting for credit migrations of the clearing members and the stochastic dependence between these migrations, which is modeled in terms of Markov structures. Another important aspect of the proposed methodology is the use of the time consistent dynamic risk measures.

The major objective of the numerical study is to compute the \gls{DF}/\gls{IM} ratio for various model configurations.
For this purpose, we consider a stylised example of a CCP consisting of 8 clearing members, each holding a portfolio of Credit Default Swap (CDS) contracts.
The obtained numerical results indicate that our approach may offer a significant improvement over the existing methods of computing the elements of the default waterfall currently employed by CCPs.  In particular, our findings show that the DF/IM ratio does not vary with the number of members or the size of the positions holding by the members. This indicates that our model scales appropriately the sizes of both the DF and IM with the increasing number of CMs. On the contrary, the Cover~1/Cover~2 methodology (cf. \cite{CFTC2016}) for computing \gls{DF} does not scale appropriately the size of DF with the increasing number of CMs.

The paper is organized as follows. In Section~\ref{sec:mechanics} we briefly recall the mechanics of a CCP risk management and the structure of the risk waterfall. Section~\ref{sec:D-CCP} is dedicated to a dynamic model of a CCP operation. We devote a separate subsection to each layer of the default waterfall, which, in particular, includes the analysis of the proposed methods and their comparison with the existing ones.
There is no ambiguity on defining the variation margin, which is just mark-to-market of the member's portfolio. In Section~\ref{sec:IM} we define the cashflow on which the \gls{IM} is computed, and we use a dynamic convex risk measure to determine the \gls{IM}. The (prefunded) \gls{DF} is studied in Section~\ref{sec:DF}. We start by identifying the net exposure of the \glspl{CM}, and the computation of the total \gls{DF}, where we again make use of a dynamic convex risk measure. In Section~\ref{sec:DFAllocation} we investigate the allocation of the total \gls{DF} among the \glspl{CM}, and we propose a method rooted in the theory of capital allocation based on risk contributions, that uses the extreme measures from the robust robust representation of risk measures. Finally, in Section~\ref{sec:numerical} we care out a comprehensive numerical study, where, in particular, we describe the modeling aspects of the cleared products, the dependence structure of the defaults times of the \glspl{CM}, as well as the advantages of the proposed methodology and its comparison with the currently prevailing methods used in industry.
We deferred to appendix some technical aspects of the paper: Appendix~\ref{sec:avar} contains relevant material on Conditional \gls{AVaR}, Appendix~\ref{sec:MC} is devoted to Markovian structures, and in Appendix~\ref{sec:bigbang} we briefly present the models we use for modeling portfolios of \glspl{CDS} contracts.

\section{CCP Default Waterfall}\label{sec:mechanics}

A CCP is an independent entity, that complies with regulatory requirements, and clears certain standardized  financial contracts. The general guidelines and principles for regulatory capital are  given by \gls{BCBS}, and in US the CCPs are regulated and supervised by governmental entities such as FED, SEC, CFTC.
A financial institution that wants to clear/trade through a CCP must be a \gls{CM} of CCP. The number of members of a CCP varies: ICE Clear US has 28 {members}, OCC has 100+ etc.

We will briefly outline the structure of a typical CCP;  for more details on mechanics of CCPs see for instance the monographs \cite{Gregory2014Book,Murphy2013}. The members of the CCP clear through the CCP a portfolio of assets (usually of the same class).
To remain financially solvent the CCPs charges each member various `margins', and the default management procedure of a CCP is done through so called \textit{default waterfall} or \textit{loss waterfall} to cover the losses due to the default of the CMs.

Each member maintains a margin account with the CCP, that is replenished, typically on daily basis,  through \gls{VM}, which is the \gls{MtM} value  of the open positions. Thus, the daily changes of the \gls{MtM} of the member's portfolio is transferred to the CCP. In practice, only the changes beyond a given threshold are transferred, but in this study we will omit this technicality.

Additionally, the CCP charges each member an \gls{IM}, that aims to cover the risk exposure of the CCP arising from
potential future market fluctuations of the member's portfolio over some risk horizon or margin period of risk (typically 5-10 days). The IM is also usually called on daily basis.

On top of that, each CM contributes to the prefunded Default Fund that acts as a form of mutualised loss sharing. It is also called clearing deposits or guaranty fund contributions. For brevity we will call it simply the \gls{DF}. The \gls{DF} is usually called on monthly basis.

As an incentive to implement proper risk management, the CCP pledges part of its equity (about 20-25\%) to be used to cover losses above the IM and DF of the defaulted CMs, and sometimes additional capital if the total DF (of all CMs) was already used. Traditionally this is called CCP equity contribution or skin-in-the-game.

If all the above margins and layers of defence are insufficient, the CCP may call for additional capital contribution from the survived CMs, called \gls{uDF}.

Below, we give a schematic of the default waterfall. Each consecutive layer of the waterfall is used, if losses exceed the sum of funds from the previous layers:

\smallskip

 \begin{center}
IM of the defaulted member(s) \\
$\Downarrow$ \\
(pre-funded) DF contribution of the defaulted member(s) \\
$\Downarrow$ \\
CCP equity, or skin-in-the-game \\
 $\Downarrow$  \\
(pre-funded) DF of the surviving members \\
$\Downarrow$ \\
Unfunded DF from the surviving  members\\
and/or  some additional capital from CCP  \\
\end{center}
If the losses of CCP are still beyond the above waterfall layers, the CCP enters into recovery mode, by taking some extraordinary measures to cover the losses. Finally, if none of the above measures can bypass the overall losses, the CCP becomes insolvent and defaults, by getting through resolution plan.
As already mentioned, the capital structure of the CCPs, and the design of the default waterfall have to meet the regulatory requirements.

\section{A Dynamic Model of the Default Waterfall}\label{sec:D-CCP}
In this section we provide an extension of the static model of CCP that was put forth in \cite{Ghamami2014} to dynamic, discrete time setting.

\subsection{Generalities}\label{sec:generalities}
We consider a CCP consisting of $\gls{I}$ \glspl{CM}, with $I>1$. Our study regards a generic CCP without  making any specific assumptions or postulates on the nature of the cleared portfolios. For simplicity, it is assumed that the constituent portfolios of CMs do not change, unless defaults happen, and that no new clearing members are added. Hence, the number of \glspl{CM} may only decrease due to their defaults. Nevertheless, our model can easily be adopted to the general case, when new CMs may be added to the composition of the CCP.

As said, we work in the discrete time framework. Accordingly, we denote by $\gls{deltaf}>0$ the fundamental unit of time; typically, $\delta_f$ is one business day, and sometimes even smaller. Thus, in what follows all considered  time instances will be implicitly assumed to be multiples of $\delta_f$. We will make use of the notations $\gls{cT}:=\set{0,\delta_f,2\delta_f,\ldots, T-\delta_f,T}$, and $\gls{tk}:=k\delta_f, k=0,1,\ldots,$ where $T$ denotes the time horizon relevant to CCP.

Our model is risk-sensitive. By this we mean that the model accounts for possible credit migrations of the CMs. We postulate, that changes, or migrations, in credit ratings of the CMs, in particular their defaults,   can only occur at times in $\cT$.
We will denote by $\gls{taui}$ the default time of the CM$^i$.
In addition, our model allows for the default of the CCP. This aspect of the model will be analyzed in detail in  a future work.

Conforming to the general practice of CCP operations, we suppose that the IMs and the VMs are computed, and called,  according to discrete tenor  $\gls{cTf}:=\set{0,\delta_f,2\delta_f,\ldots, T-\delta_f}$, and that  the DF calls are made according to their respective tenor $t\in \gls{cTF}:=\set{T_0,T_1,T_2,\ldots , T-\Delta_f}$, where $\gls{Tj}:=j\Delta_f, j=0,1,\ldots$ with \gls{DeltaF} denoting a time interval, which is multiple of $\delta_f$, and usually of order of weeks.
The (potential) uDF calls are done, if needed,  according to a shifted tenor $\gls{cTdelta}:=\set{t_1+\delta, t_2+\delta,\ldots , T^\delta}$ for some $\gls{Tdelta}\leq T.$

We fix a filtered probability space $(\Omega,\sF,\mathbb{F},P)$, where the filtration $\mathbb{F}=(\sF_t, t\in \mathcal{T})$ models the information flow available to the CCP. We will denote by $E$ the expectation under probability $P$. The probability measure $P$ is interpreted as the statistical  (or the actuarial) probability measure.

We make a standing assumption that, for ech $i$, the default time  $\gls{taui}$  is a stopping time with respect to $\mathbb{F}$.

All random processes considered here are defined on  $(\Omega,\sF,\mathbb{F},P)$. In particular, the processes are adapted to $\mathbb{F}$.

We denote by $\gls{Vti}$ the nominal portfolio value at time $t\in\cT$ of the member $i\in \cI$, where by nominal we mean the MtM portfolio values,  given that the members do not default prior or at  $t\in \cT^{f}$. In addition, we denote by \gls{beta} the discount factor used by the CCP.

Throughout, we view all the cash flows from the perspective of the CCP, and thus we adopt the convention that a positive cash flow indicates an inflow of funds to the CCP from the member(s), and a negative cash flow is a payment by CCP to the member(s). In particular, since CCP runs a matched order book we must have that
$$
\sum_{i\in\cI}V^i_t=0.
$$

The dividend process associated with the $i$th CM portfolio is denoted by $\gls{Di}$, and thus the dividend payment at time $t_k$ by the member $i$ to the CCP is equal to $D^i_{t_k}$.

The dynamics of credit migrations of all the CMs are modeled in terms of a multivariate Markov chain, and are subject to marginal constraints; we refer to Appendix~\ref{sec:MC} for details.

We now turn to modeling the relevant margins.

%\subsection{Variation Margin}

\subsection{Variation Margin and Initial Margin}\label{sec:IM}
There is no ambiguity about the definition of the VM.  It is just the MtM of the member's portfolio, so that the VM of the $i$th member at time $t_k$  is given by
\begin{equation}\label{eq:VM}
\gls{VMitk}=V^i_{t_{k-1}}.
\end{equation}

The rest of this section will be devoted to the IM. Currently, the IM is usually computed as \gls{VaR} or as \gls{ES}, at some confidence level, applied to the exposure.

Given the dynamic setup adopted in this paper, we propose to use a general dynamic risk measure \cite{BCP2017,AcciaioPenner2010} to compute the IM through time.

Let us denote  by \gls{Xitk} the cash flow (adjusted for time value of money at time $t_k$), based on which the \gls{IMitk} is computed:
\begin{equation}\label{eq:Xitk}
X^i_{t_k} = \beta^{-1}_{t_k}\beta_{t_{k}+\delta}V^i_{ t_{k}+\delta} +\beta^{-1}_{t_k}\sum_{u=t_{k}}^{ t_{k}+\delta}\beta_{u}D^i_u - \textrm{VM}^i_{t_k}.
\end{equation}

According to our convention, a positive value of $ X^i_{t_k}$ is the net exposure of the CPP towards the member $i$.
Accordingly, we propose to compute the \gls{IM} called from the $i$th member at time $t_k$  as
\begin{equation}\label{eq:IMmethod2}
\textrm{IM}^i_{t_k} = \gls{rho}_{t_k}(-(X^i_{t_k})^+),
\end{equation}
where $\rho$ is a dynamic convex risk measure (a function that is local, monotone decreasing, cash-additive, convex, and time consistent; see Remark~\ref{rem:DRM}.(i) for more details).

Computation of $\textrm{IM}^i_{t_k} $ requires use of two different probability measures. When computing the value $V^i_{t_{k}+\delta}$ of the mark-to-market we need to use a risk neutral (or pricing) probability measure, say $P^*$. Specifically, $V^i_{t_{k}+\delta}$ is computed as conditional expectation under measure $P^*$ of a future (occurring after time $t_{k}+\delta$) discounted relevant  cash flows. The conditioning is done with respect to the sigma field $\sF_{t_{k}+\delta}$. On the other hand, the risk measure $\gls{rho}_{t_k}$ is based on the statistical measure $P$ and on the information carried by $\sF_{t_k}$.
\begin{remark} \label{rem:DRM}
Several comments are in order.

\smallskip\noindent
(i) Throughout, we take the risk management point of view at the cash flows and the computation of the corresponding risk. Accordingly, for us, a dynamic risk measure is a monotone decreasing, local,  normalized and cash-additive function. We refer the reader to e.g. \cite{BCP2017,AcciaioPenner2010} for a comprehensive survey of the theory of dynamic risk measures.

We want to stress that, in contrast to the convention adopted in the counterparty risk literature, in our setting the risk of a negative cash flow is a positive quantity, and a positive cash flow has no risk (or negative risk).
For example, if $\gls{rho}_{t_k}(-(X^i_{t_k})^+)$ is computed as classical dynamic $\var_\alpha$, then $\rho_{t_k}(-(X^i_{t_k})^+)$ is the lower $\alpha$-quantile of $-(X^i_{t_k})^+$ computed under measure $P$ and conditioned on $\sF_{t_k}$.
This explains the negative sign in  \eqref{eq:IMmethod2}.

\smallskip\noindent
(ii) From the risk evaluation point of view, the CCP's ``risk'' associated with the cash flow $X^i_{t_k}$ is   equal to $\rho_{t_k}(-X^i_{t_k})$, which could be in principle negative (the CCP should make a payment) and lead to inconsistency since CCPs do not pay any IM to the members. This is the reason of taking positive part of the exposure in \eqref{eq:IMmethod2}.

\smallskip\noindent
(iii)
An alternative proposal for computation of the IM called from the $i$th member at time $t_k$ might be
\begin{equation}\label{eq:IMmethod1}
\textrm{IM}^i_{t_k} = \left(\rho_{t_k}(-X^i_{t_k})\right)^+.
\end{equation}
Note that the IM$^i$ computed via \eqref{eq:IMmethod2} is greater  than that computed via \eqref{eq:IMmethod1}.
Indeed,  $X^i_{t_k}\leq (X^i_{t_k})^+$, and thus $-X^i_{t_k}\geq - (X^i_{t_k})^+$, and consequently
$
  \rho_{t_k}(-X^i_{t_k}) \leq \rho_{t_k}(-(X^i_{t_k})^+).
$
   Since $-(X^i_{t_k})^+\leq 0$, we get that $\rho(-(X^i_{t_k})^+)\geq 0$, and by taking positive part of both parts in the inequality above we deduce that
  $$
  (\rho_{t_k}(-X^i_{t_k}))^+ \leq \rho_{t_k}(-(X^i_{t_k})^+).
  $$
  Thus,  IM$^i$ computed via \eqref{eq:IMmethod2} is more conservative, as expected, since CCP does not care about the gains of its members, but only about its losses.

\smallskip\noindent (iv) Both \eqref{eq:IMmethod2} and \eqref{eq:IMmethod1}, in principle can be zero, which means that a CM does not post any IM to the CCP - which, practically,  is a strange situation. However, under normal market conditions, the distribution of $X^i_{t_k}$ will have both positive and negative parts, with lower  quantiles (say 10\%) being negative.
In this case, for a large class of risk measures, such as $\var$ and Expected Shortfall, the IM computed by \eqref{eq:IMmethod2} and \eqref{eq:IMmethod1} will be equal and will coincide with $\rho_{t_k}(-X^i_{t_k})$. However, for other risk measures, such as Entropic Risk Measure\footnote{The Dynamic Entropic Risk measure of a future cash flow $X$ is defined as $
\textrm{Ent}(-X) = \frac{1}{\alpha}\log E\left[ e^{\alpha X}\right],
$
where $\alpha\in(0,1)$. }, this may not be the case.
\end{remark}
A detailed analysis of how the choice of the risk measure $\rho$ impacts the default waterfall will be performed in Section~\ref{sec:numerical}.

\subsection{Prefunded Default Fund}\label{sec:DF}
In the current market practice, the prefunded DF (or simply DF) usually is computed following the Cover1/Cover2 principle, which stipulates that the DF should cover CCP losses in case when one or two CMs with largest exposure to the CCP default. In our opinion, and, in fact, in the opinion of practitioners and regulators, this is not the appropriate principle. One of the reasons for this is that, typically, the computed DF levels cover the losses induced from the potential defaults of two largest CMs do not guarantee the complete coverage of these losses, meaning that potential losses exceeding the DF occur with positive probability. Another possible reason, we think, is that in the current practice of CCP computation of the DF is done assuming that the default times of the CMs are independent. Moreover, usually it is assumed that the credit worthiness of all members is the same. These drawbacks are addressed in the proposed model. We refer to Section~\ref{sec:numerical} for the analysis of how our model performs vis-a-vis these drawbacks.

We start by introducing the  liquidation or auction value of the CM$^i$'s portfolio. If the CM$^i$ defaults at time $\gls{taui}=t_k$, then the outstanding CM$^i$'s portfolio is either liquidated or auctioned over the period of time $\delta$.
We denote by $\gls{hatVi}_{t_{k}+\delta}$ the value of this outstanding portfolio that is recovered by time $t_{k}+\delta$.
 Consequently, we define the \textit{net exposure} at time $T_k$ of the CCP to the default of the $\gls{CMi}$ net the IM and VM as
\begin{align}\label{eq:preexp}
\gls{EPi}_{T_k} =  \beta_{T_k}^{-1}\sum_{t_m=T_k+\delta_f}^{T_{k+1}}
\Big( \beta_{t_m+\delta} & [V^i_{t_m+\delta}- \widehat V^i_{t_{m}+\delta}]+  \sum_{u=t_{m}}^{ t_{m}+\delta}\beta_{u}D^i_u \nonumber \\
& \qquad - \beta_{t_m} [\textrm{VM}^i_{t_m} + \textrm{IM}^i_{t_m}] \Big)^+\1_{\tau_i=t_m}.
\end{align}
Without much loss of generality, we assume that for each $T_k$ the random variable $\textrm{EP}^i_{T_k}$ is bounded (this can always be achieved by capping the exposures by a large constant).

\begin{remark} (i)
Calculation of the liquidation or auction value of CM$^i$'s portfolio, that is, calculation of the recovery term $\gls{hatVi}_{t_{k}+\delta}$, is an important  aspect of computation of the net exposure. We will briefly mention two possible approaches.

A Brazilian CCP, called BM\&FBOVESPA, has put effort into developing a methodology, named CORE, aimed at optimal liquidation of a multi-asset portfolio and thus generating an optimal (``close to the marketed'') liquidation value of CM$^i$'s portfolio (cf. e.g. \cite{CORE-2015}).

On the other hand, guided by common practice adopted in the banking industry regarding asset recovery valuation, one may propose to compute the liquidation or auction value of CM$^i$'s portfolio as a fixed fraction of the marketed value $V^i_{t_m+\delta}$ of this portfolio, so that, say, $\gls{hatVi}_{t_{k}+\delta}:=R V^i_{t_m+\delta}$, where $R$ is the recovery rate. We will adopt this convention in our numerical stress tests (assuming also that $R$ is a deterministic constant).

(ii) In the current regulatory practice  the net exposure does not account for the liquidation or auction value of CM$^i$'s portfolio to the effect that the \textit{regulatory net exposure} is taken to be
\begin{equation}\label{eq:reg-preexp}
\gls{EPiReg}_{T_k}= \beta_{T_k}^{-1}\sum_{t_m=T_k+\delta_f}^{T_{k+1}}
\left ( \beta_{t_m+\delta} V^i_{t_m+\delta} - \beta_{t_m} (\textrm{VM}^i_{t_m} + \textrm{IM}^i_{t_m}) \right )^+\1_{\tau_i=t_m}.
\end{equation}
This of course is, in general, more conservative evaluation of the net exposure. It really amounts to postulating a ``no--recovery'' paradigm.
\qed
\end{remark}

Similarly to the computation of IM, we propose to use a dynamic risk measure, say $\gls{eta}$, for computing the total DF.
The proper choice of $\eta$ is an important issue, and it is discussed in the next section.
The total DF, at time $T_k$, is defined as
\begin{equation}\label{eq:DF}
\textrm{DF}_{T_k} = {\eta_{T_k}} \left(-\sum_{i\in \cI}\textrm{EP}^i_{T_k} \right).
\end{equation}
Note that \gls{EPi} is a non-negative quantity, and thus $\textrm{DF}_{T_k}$ is always non-negative too, given that $\eta$ is a normalized risk measure. Similarly as in the case of the initial margin,  the risk measure $\eta_{T_k}$ is based on  statistical measure $P$ and on the information carried by $\sF_{T_k}$.

\subsubsection{DF allocation}\label{sec:DFAllocation}
Once the size of the DF is established, the CCP has to allocate it among CMs in some fair way.
Usually, in the current practice, the allocation is done proportionally to the initial margin of each CM. We will propose here an alternative approach.

Towards this end, we first note that from the mathematical point of view DF allocation is related to the so called capital allocation, and, just as in the case of capital allocation, one runs into various difficulties.  Generally speaking, the capital allocation based on risk contribution is not an easy issue to deal with. There exists significant research devoted to this topic, in particular by using risk measures as the main tool; for static risk measures see for instance \cite{Tasche2000,Denault2001,cherny-2006p2,Delbaen2000,Fischer2003,Kalkbrener2005,Tasche2002,ArmentEtAl2015,BrunnermeierCheridito2014}, and for dynamic risk measures see for instance \cite{Cherny2009a,KromerOverbeckZilch2015a}.

One of the DF allocation principles that has been suggested in the literature, is the Euler principle (cf. \cite{Tasch2007,EmbrechtsLiuyWangz2016}). Application of this principle requires that certain technical conditions are satisfied, which  are not satisfied in our model. Also, the Euler principle is not popular among practitioners. Here we propose a DF allocation principle, which we think will be an adequate tool to be used by CCPs.

In order to suggest a DF allocation scheme, we proceed by assuming that $\eta$ is a dynamic coherent  risk measure \cite{ArtznerDelbaenEberHeathKu2002b}.
Then, $\eta$ admits a robust representation of the form
\begin{equation}\label{eq:robust}
  \eta_t(X) = \esssup_{Q\in\cQ_t}E^Q[-X\mid\sF_t],
\end{equation}
where $\cQ_t$ is a set of probability measures absolutely continuous with respect to $P$, such that for any $Q\in\cQ_t$, $Q=P$ on $\sF_t$, and that satisfies some additional technical properties.
Under some mild assumptions on $\cQ_t$, the essential infimum in \eqref{eq:robust} is attained, say at $Q^*_t$.
We denote by $Z^*_t$ the Radon-Nikodym  derivative  $dQ_t^*/dP$.
Consequently, for $t=T_k$ and $X=-\sum_{i\in \cI}\textrm{EP}^i_{T_k}$, we have
\[
\eta_{T_k}\left(-\sum_{i\in \cI}\textrm{EP}^i_{T_k} \right) = E[Z^*_{T_k}\sum_{i\in \cI}\textrm{EP}^i_{T_k}|\sF_{T_k}],
\]
and so
\[
\textrm{DF}_{T_k} = \sum_{i\in \cI}E[Z^*_{T_k}\textrm{EP}^i_{T_k}|\sF_{T_k}].
\]
We propose to define the individual default fund contributions $\gls{DFi}_{T_k}$ as
\begin{equation}\label{eq:DFAlloc}
\gls{DFi}_{T_k}:=E[Z^*_{T_k}\gls{EPi}_{T_k}|\sF_{T_k}].
\end{equation}
In particular, this leads to an important consistency property of the proposed DF allocation
\begin{equation}\label{eq;sumDF}
\sum _{i\in \mathcal{I}}\gls{DFi}_{T_k}= \textrm{DF}_{T_k}.
\end{equation}

In addition, the default fund allocation done according to \eqref{eq:DFAlloc}, enjoys another key property: it provides the monotonicity of the allocation with respect to the net exposures. In the extreme case, if $\gls{EPi}_{T_k}\geq \textrm{EP}^j_{T_k}$ then $\gls{DFi}_{T_k}\geq \textrm{DF}^j_{T_k}$.

\begin{remark}\label{remark:nonUnique}
It is important to observe that the maximizers $Z^*_{T_k}$ are not necessarily unique. This, for instance, is the case if the  default fund allocation rule is based on the conditional average value at risk, as discussed in the example   below. Consequently, the individual default fund contributions $\gls{DFi}_{T_k}$ defined as in \eqref{eq:DFAlloc} are not unique, in general. This gives the CCP flexibility in allocating the default fund to individual members.
\end{remark}

\begin{example}
A special example of the default fund allocation rule, that we will use in our numerical studies presented in Section~\ref{sec:numerical}, is based on $\eta$ given as the Conditional Average Value at Risk (\gls{AVaR}), also known as the conditional expected shortfall.\footnote{If the distribution of a random variable is continuous, then its conditional expected shortfall coincides with conditional  tail conditional expectation.}
This measure is attractive both from the mathematical perspective and from the practical perspective.
It is a dynamic coherent utility measure (see e.g.~\cite{ChernyWVAR2006}), and it is supermartingale time-consistent (see e.g.~\cite{BCP2014a}).

As it is shown in the Appendix~\ref{sec:avar}, there exists a maximizer  $Q^*$ in \eqref{eq:robust}, with the corresponding Radon-Nikodym density $Z^*$
$$
Z^* = \frac{1}{\alpha} (\1_{X<q^\pm_\alpha(X\mid \sF_t)} + \varepsilon \1_{X=q^\pm_\alpha(X\mid \sF_t)} ),
$$
where $q^\pm_\alpha(X\mid \sF_t)$ is the conditional upper/lower $\alpha$-quantile of $X$, and
$$
\varepsilon =
\begin{cases}
  0, & \mbox{if } P(X=q^\pm_\alpha(X\mid \sF_t))=0 \\
  \frac{\alpha-P(X<q^\pm_\alpha(X\mid \sF_t))}{P(X=q^\pm_\alpha(X\mid \sF_t))}, & \mbox{otherwise}.
\end{cases}
$$

\end{example}

\begin{remark}
It turns out that the proposed allocation scheme with $\eta_t(\,\cdot\,)=\avar_\alpha(\,\cdot\,\mid\sF_t)$ coincides with the corresponding Euler allocation scheme for conditional $\avar$ \cite{Tasch2007}. However, generally speaking these two schemes are different.
\end{remark}

\subsection{CCP Skin-in-the-Game and Assessment Power}
In order to complete the model for the unfunded default fund we need to describe the way in which the CCP equity process (or skin-in-the-game) is formed. There is no consensus regarding rules for formation of this process, and usually it represents a percentage (e.g. 20-25\%) of the CCP regulatory capital.  This part of the default waterfall is  usually small in comparison to other parts of the  waterfall.

%\subsection{Unfunded Default Fund (\gls{uDF})}
For $t_k\in[T_j,T_{j+1})$, we  set the effective CCP's loss at time $t_{k}+\delta$ to be given as
\begin{align*}
\textrm{EL}_{t_{k}+\delta}
& = \beta^{-1}_{t_k+\delta}\Big( \sum _{i\in \mathcal{I}}\left(\beta_{t_{k}+\delta}V^i_{t_{k}+\delta} - \beta_{t_k+\delta}\widehat V^i_{t_{k}+\delta}
-\beta_{t_k}\textrm{VM}^i_{t_k} - \beta_{t_k}\textrm{IM}^i_{t_k}\right )\1_{\tau_i=t_k} \\
 & \qquad -  \beta_{t_k+\delta}{\gls{SG}_{t_{k}+\delta}}- \beta_{T_j}\textrm{DF}_{T_j}\Big)^+.
\end{align*}
Then, the \gls{uDFi} for the $\gls{CMi}$ at time $t_{k}+\delta$, which is essentially the last resort before the CCP defaults (see the default waterfall diagram), is defined as
\begin{equation}\label{unfund1}
\gls{uDFi}_{t_{k}+\delta} = \frac{\textrm{DF}^i_{T_j}\1_{\tau_i>T_j}}{\sum _{l\in \mathcal{I}} \textrm{DF}^l_{T_j}\1_{\tau_l>T_j}}
\textrm{EL}_{t_{k}+\delta},
\end{equation}
where by convention $\frac{0}{0}=0.$

%%%%%%%%%%%%%%%%%%%%%%%%%%%%%%%%%%%%%%%%%%%%%%%%%%%%%%%%%%%%%%%%%%%%%%%%%%%%%%%%%%%%%%%%%%%
\section{Numerical Studies}\label{sec:numerical}

We present in this section the simulation results for a CCP model consisting of 8 CMs, each holding a portfolio of 4 single name CDS contracts. Our major objective is to compute the DF/IM ratio. There are at least two reasons for that. First, the magnitude of this ratio indicates whether the CCP's DF risk management practices provide adequate risk reserve. Typically, in the CCP practice, the DF/IM ratios take values around 0.1, which can serve as benchmark for testing one aspect of the model performance. The numerical study that we have done indicates that our model performs satisfactorily in this regard. Secondly, as postulated by the regulators,  any prudent way of computing DF and IM should keep the DF/IM ratio invariant with respect to the number of CMs. Our numerical study indicates that our model performs satisfactorily in this regard as well; we stress tested the  DF/IM ratio different numbers if CMs and the ratio is pretty much constant.

It needs to be emphasized here that the predominantly used in practice cover one and cover two (C1 and C2 for short) methodologies for computing the DF suffer from inadequate scaling of the DF with the changing number of CMs. In fact, the DF computed according to these methodologies is essentially invariant with regard to the increasing number of CMs, as long as the CMs are all alike as far as their portfolio composition is concerned. On the other hand, the IM clearly increases with the number of CMs. So, the resulting DF/IM ratio decreases with the increasing number of CMs, which is an unreasonable and highly undesired feature of course. So, replacing the predominant C1/C2 methodologies with a method like ours, may significantly contribute to improvement of robustness of CCP operations.

In the process of computing the DF/IM ration, using our model, we stress test these two components of the default waterfall. We ignore the other elements of the default waterfall since, as we believe, they are less relevant for the premise of this paper as stated in the Introduction.

Regarding the simulation part of our computations, we note that DF is computed based on IM, which leads to a nested simulation in numerical studies. Because of this we assume constant default intensities for all CDSs in our portfolios, in which case IM can be computed explicitly. As a result, we rule out the nested simulation error in DF computation and we reach a more reliable DF$^i$/IM$^i$ ratio.

Although we focus in our numerical study on a stylized model of 8 CMs and portfolios of CDS contracts, it needs to be stressed that, in principle,  the proposed  theory can be applied to  any number of CMs and to portfolios of any traded contracts.

As said above, for each CM we consider a portfolio consisting of 4 single name CDS contracts. We assume  that at the times at which IM and DF are computed all the listed CDS contracts in the portfolios are alive; otherwise they will not be listed. We adopt the post-Big Bang convention for mechanics of the CDS contracts. We refer to  Appendix \ref{sec:bigbang} for more details and formulae that we use to compute MtM for CDSs.

Throughout this section, we assume that there are 252 business days in one year, and we take the fundamental time $\delta_f$ to be one business day, the margin period of risk $\delta$ to be 10  business days, and $\Delta_f$ to be 30 business days. All relevant spreads, and rates are given on annualized basis. We fix the recovery rate as $R=0.4$, and the CDS spread $\kappa$ as 0.01 (100 bps). We maintain 4 (alive) CDS contracts. The constant default intensities underlying the contracts are listed in the presentation of the numerical results below.  For simplicity, we assume these contracts to be all initiated on June 20, 2015 and to  mature on June 20, 2018. The results we show are computed for Sept 22, 2015.

Let $H=(H_{ij})_{8\times 4}$, where $H_{ij}$ represents the position held by the $i$-th CM in the $j$-th CDS contract, $i=1,\ldots,8,j=1,\ldots,4$, as seen from the CCP perspective. Specifically, the positive $H_{ij}$  means that the CCP has  long position (buys protection) in $H_{ij}$ shares of the $j$-th CDS contract relative to the $i$-th CM; analogously the negative  $H_{ij}$  means that the CCP has  short position (sells protection). Since the CCP runs a matched order book, we have that $\sum_{i\in \mathcal{I}} H_{ij}=0$. We will use the notation $H^i=(H_{i1},\ldots,H_{i4})$ to denote the positions of CCP with respect to CM$^i$.

\subsection{Initial Margin}\label{sec:NumIM}
Even though our model is set in discrete time, we use the simplifying continuous time convention for computing the mark-to-market value of the $j$-th CDS contract at time $t_k+\delta$, which we denote by $S_{t_k+\delta}^j$, and which is used for  computing $V^i_{t_k+\delta}$ in formula \eqref{eq:Xitk}. In particular, in the valuation of CDS contracts, the CDS spread is assumed to be paid continuously; see Section \ref{sec:bigbang}. %Since we take zero interest rates in the simulation, this assumption makes no difference from lump premium payment.
On the contrary, when computing CCP's exposure at time $t_k$, resulting from the $j$-th CDS, we consider lump CDS spread (coupon) payment. Thus, given our assumption of zero interest rates,  the  CCP exposure at time $t_k$, resulting from the $j$-th CDS, is given as $\check S_{t_k}^j:=S_{t_k+\delta}^j+\sum_{u=t_k}^{t_k+\delta}d_u^j-S_{t_k-1}^j,$
where $d_u^j$ is the  dividend associated with the $j$-th CDS at time $t_k$. Thus
(cf. \eqref{eq:Xitk}),
\begin{equation}\label{eq:exp} X_{t_k}^i=\sum_{j=1}^{4} H_{ji} \check S_{t_k}^j.\end{equation}
In our set up there is either one or none CDS spread payment at dates between $t_k$ and $t_k+\delta$. Assuming that the next coupon payment (premium) day is $T_D > t_k$, and the last premium coupon payment occurred on $t_D\leq t_k$, we see that the term $\sum_{u=t_k}^{t_k+\delta}d_u^j$ takes the form
\begin{equation}\label{eq:div}	
	\begin{cases}
		-\kappa (T_D-t_D)\1_{t_k<T_D\le t_k+\delta},\quad \phi^j>t_k+\delta,\\
		-\kappa( \phi^j-t_D)-S_{t_k-1}^j, \quad t_k<\phi^j \le t_k+\delta,
\end{cases}
\end{equation}
Accordingly, the exposure $\check S_{t_k}^j$ can be written as
\begin{equation}\label{eq:loss}
	\check S_{t_k}^{j}=
	\begin{cases}
		S^j_{t_k+\delta}-\kappa (T_D-t_D)\1_{t_k<T_D\le t_k+\delta}-S_{t_k-1}^j,\quad \phi^j>t_k+\delta,\\
		R-\kappa( \phi^j-t_D)-S_{t_k-1}^j\approx R- \kappa (t_k-t_D)-S_{t_k-1}^j, \quad t_k<\phi^j \le t_k+\delta.
\end{cases}
\end{equation}

Putting $p_j:=P(\phi^j>t_k+\delta \mid \phi^j>t_k)$, we obtain
\begin{align}\label{eq:dist}
  \begin{aligned}
&		P(\check S_{t_k}^j=S^j_{t_k+\delta}-\kappa (T_D-t_D)\1_{t_k<T_D\le t_k+\delta}-S_{t_k-1}^j \mid \phi^j>t_k)=p_j,\\
&		P(\check S_{t_k}^j= R- \kappa (t_k-t_D)-S_{t_k-1}^j\mid\phi^j>t_k)=P(\phi^j\le t_k+\delta \mid \phi^j>t_k)=1-p_j.
\end{aligned}
\end{align}
In this study  we assume constant intensity of the default times of the reference names underlying the CDS contracts, in which case the conditional distribution of $X_{t_k}^i$ given $\sF_{t_k}$ can be computed explicitly; see Section~\ref{sec:bigbang} for details.
Also note, that $X_{t_k}^i$ takes only finitely many values, say $\{x_{1}^i,x_{2}^i,\ldots,x_{N}^i \}$, for some $N\leq 16$, and
without loss of generality, we assume that $x_{n}^i\geq x_{n+1}^i$, for all $n=1,\ldots,N-1$.
In this case, the  conditional $\var_\alpha$ and $\avar_\alpha$ for a fixed confidence level $\alpha$ are computed as
\begin{align*}
\var_\alpha(-(X_{t_k}^i)^+\mid \sF_{t_k}) & =(x_{n_\alpha}^{i})^+, \\
\avar_\alpha(-(X_{t_k}^i)^+\mid \sF_{t_k}) & =
\frac{1}{\alpha} ((x_{1}^i)^+ p^i_1+\ldots+(x_{n_\alpha}^i)^+ (\alpha-\sum_{n=1}^{n_\alpha} p_n^i)),
\end{align*}
where  $p^i_n = P(-(X_{t_k}^i)^+=-(x^i_{n })^+ \mid \sF_{t_k})$, $n=1,\ldots,N$, and $n_\alpha=\min\{k \mid \sum_{n=1}^{k} p_n^i>\alpha\}$.

Towards this end, we take the default intensities for the four considered CDSs to be respectively equal  to
$$
\lambda_1=0.002, \quad \lambda_2=0.01, \quad \lambda_3=0.015, \quad \lambda_4=0.03.
$$

It needs to be stressed that, in what follows, the numerical values of $\var_\alpha(-(X_{t_k}^i)^+\mid \sF_{t_k})$ and $\avar_\alpha(-(X_{t_k}^i)^+\mid \sF_{t_k})$ are computed on the event that at time $t_k$ all the CDS contracts and all the CMs are still alive.

We start by computing the numerical values of the CCP exposure for holding a long position in each CDS contract.
In Table~\ref{table:dist} we present the values of $S_{t_k}^{j}$ assuming that there is no premium due during the margin period of risk $\delta$.
{\small
\begin{table}[H]
\caption{Conditional Distribution of $\check S_{t_k}^{j}$. }
	\begin{center}
		\begin{tabular}{|r |c |c| c|c|c|}
\hline
			& \quad CDS$^1$ \quad &  \quad  CDS$^2$  \quad  & \quad   CDS$^3$ \quad  & \quad  CDS$^4$ \quad  \\	
\hline		
			$\check S_{t_k}^{j} \mid \phi^j>t_k+\delta $     &0.0004 &0.0003 &0.0002 &-0.00008\\
			$p_j$                                            &0.9999 &0.9996 &0.9994 &0.9988\\
\hline
$\check S_{t_k}^{j} \mid t_k<\phi^j \le t_k+\delta $        &0.4252 &0.4162 &0.4107 &0.39\\
			$1-p_j$                                          &0.0001 &0.0004 &0.0006 &0.0012\\
\hline
		\end{tabular}
	\end{center}
\label{table:dist}
\end{table}}
As one may expect the (absolute) value of the exposure from individual CDS contracts is large if the reference entity  defaults, although this happens with small probability. Also note that a short position (CCP sells protection) in a CDS contract does not contribute significantly to the CCP's exposure towards the corresponding member. Hence, short positions do not net out significantly the exposure from the long position with the same CM.
Consequently,  the distribution of $X_{t_k}^i$ will be skewed, and skewness being determined by the overall number of long positions of CCP with the CM$^i$. A larger number of independent CDS contracts in the portfolio will increase the skewness too.

Since the value of the IM depends only on the individual CM position, for the rest of this section, let us consider three CMs portfolios that will be analyzed separately
$$
H^1 = (10,10,-1,-1), \quad  H^2 = (10,-100,5,5), \quad  H^3 = (1,-100,-100,-100).
$$
The first and, respectively  the second, portfolio, is moderately long, and respectively short,  in aggregate.
The third portfolio is extremely short in aggregate. In Table~\ref{table:dist1}-\ref{table:dist3},  using \eqref{eq:exp}, \eqref{eq:loss}, \eqref{eq:CDS} and \eqref{eq:CDSpre}, we present the probability distribution function (pdf) of the CCP exposure  $-(X_{t_k}^i)^+$ for each of the above portfolios.
Note that $H^1$ and $H^2$ have the same number of long positions, while $H^2$ has significantly larger short position.
On the other hand, $H^1$ and $H^2$ have similar maximal exposure around $8.30$. The  probability distribution of the exposures corresponding to $H^1$ and $H^2$ is similar, and thus we will compute the IM only for $H^1$. Also note that $H^3$ exhibits  an almost zero exposure, and thus the IM will be  zero if computed using $\var$, and close to zero if one uses  $\avar$.

\begin{table}[H]
	\caption{The pdf of $-(X_{t_k}^1)^+$}\label{table:dist1}
	\begin{center}
		\begin{tabular}{|c |c |c| c|c|c|}
			\hline
			$-(X_{t_k}^1)^+$ &$-8.41$                &$-8.02$                &$-8.00$                &$-7.61$                &$-4.25$  \\
			\hline
			$p$              &$3.2\times 10^{-8}$    &$3.8\times 10^{-11}$   &$1.9\times 10^{-11}$   &$2.2\times 10^{-14}$   &$7.9\times 10^{-5}$\\
			\hline\hline
			$-(X_{t_k}^1)^+$ &$4.17$                 &$-3.86$                &-3.84                  &-3.77                  &-3.76   \\
			\hline
			$p$              & $4.0\times 10^{-4}$   &$9.5\times 10^{-8}$    & $4.7\times 10^{-8}$   &$4.7\times 10^{-7}$    &$2.4\times 10^{-7}$ \\
			\hline\hline
            $-(X_{t_k}^1)^+$ &-3.45                  &-3.36                 &-0.0065                 &0                     & \\
\hline
            $p$              &$5.6\times 10^{-11}$   &$2.8\times 10^{-10}$   &$0.998$               &0.0018                 &\\
\hline
		\end{tabular}
	\end{center}
\end{table}

\begin{table}[H]
\caption{The pdf of $-(X_{t_k}^2)^+$}\label{table:dist2}
	\begin{center}
		\begin{tabular}{|r |c |c|c|c|c|}
			\hline
			$-(X_{t_k}^2)^+$ &-8.25                  &-6.28            &-6.20                        &-4.223\\
			\hline
			$p$              &$5.6\times 10^{-11}$   &$4.7\times 10^{-8}$    &$9.5\times 10^{-8}$    &$7.9\times 10^{-5}$\\
			\hline
			$-(X_{t_k}^2)^+$ &-4.01                  &-2.03                  &-1.95                  &0\\
			\hline
			$p$              &$7.1\times 10^{-7}$    &$6.0\times 10^{-4}$    &$1.2\times 10^{-3}$    &0.998\\
			\hline		
		\end{tabular}
	\end{center}
\end{table}

\begin{table}[H]
	\caption{The pdf of $-(X_{t_k}^3)^+$}\label{table:dist3}
	\begin{center}
		\begin{tabular}{|r |c |c|}
			\hline
			$-(X_{t_k}^3)^+$&$-0.39$&$0$\\
			\hline
			$p$&$7.93\times 10^{-5}$&0.999921\\
			\hline
		\end{tabular}
	\end{center}
\end{table}

Next, we will analyze the IM computed using $\var$ and $\avar$ for $H^1$.
Figure~\ref{fig:var1}, left panel  shows that $\var$ is not sensitive to usually applied industry risk level $\alpha = 0.01$ or $\alpha=0.05$.
Specifically,  $\var_\alpha(-(X_{t_k}^1)+)  = 0.0065 $, for $ \alpha \in [0.00477, 0.998)$.
Thus, the numerical results indicate that $\var$ is not an appropriate risk measure for computing the IM. This observation is consistent with the industry practice of using $\avar$ instead for $\var$ in risk management. The values of $\avar$ for $H^1$, given in Figure~\ref{fig:var1} right panel, indicate that $\avar$ indeed is better suited for risk management applications at CCPs.

Note that $\textrm{IM}^1 = \avar_{0.01}(-(X_{t_k}^1)^+) = 0.21$, which is roughly 1\% of 20 - the total notional of long positions in $H^1$.
These results agree with industry practice where reasonable IM is 1-2\% of the notional, with $\alpha=0.01$ as generally accepted risk level.
%The results also confirm that only the long positions contribute to the IM.

\subsection{Default Fund}\label{sec:NumDF}
The default fund is computed based on the exposure $\gls{EPi}_{T_k}$ defined in \eqref{eq:preexp}.
We assume that the default times of the underlying CDS contracts are independent of the CMs credit migration processes.
We compute the DF by Monte Carlo method.
The numerical  experiments indicate that to achieve a reasonable accuracy and to balance the computational cost,
it is enough to simulate 100 paths for each $\phi^j$ and 10000 paths for $\tau^i$.

Since we assume constant default intensity for the CDS contracts, and thus, the simulation of $\phi_j$ is reduced to drawing  i.i.d. unit exponential random variables $\mathcal{E}_j$, and take $\phi_j=\inf \{ t>0: \lambda_j t\ge \mathcal{E}_j\}$.

In this section, we will consider the following two portfolios hold by CCP,
$$
H_b=
\begin{bmatrix}
1&-1&1&-1\\
-1&1&-1&1\\
10&-1&-8&-1\\
-1&2&-2&1\\
-10&5&-5&10\\
-1&-1&-5&7\\
20&10&18&-48\\
-18&-15&2&31.
\end{bmatrix}, \qquad
H_u=
\begin{bmatrix}
1&1&1&1\\
10&-1&10&-1\\
-1&10&-1&10\\
100&-5&100&-5\\
-110&-5&-110&-5\\
-1&-1&-1&-1\\
-2&-1&-6&-3\\
3&2&7&4\\
\end{bmatrix}.
$$

Portfolio $H_b$ is mostly uniform across the CM's positions and we will call it balanced, while $H_u$ will be referred as an unbalanced portfolio (with two `large' CMs).

To simulate the default times for all the CMs, we adopt a discrete time homogeneous (strong)
Markovian structure model for credit migrations.  This in particular means that the credit migration processes for all the CMs are time homogeneous Markov chains, and that the joint migration process is also a time homogeneous Markov chain. See Appendix~\ref{sec:MC} for details on Markovian structures.

We assume 8 credit rating levels $\set{1,2,\ldots,7,8}$, with  1 being the highest credit rating, and 8 being the default.
Transition to the default level 8 may only occur from levels  $\set{3,\ldots,7}$; we call transitions from levels $\set{3,\ldots,6}$ to the default level 8 \textit{the jump to default}. For simplicity we assume that the ratings can only change between neighboring values, unless the CM jumps to default. Finally, we  assume that the marginal (or individual) migration processes are governed by the same  transition matrix for all CMs.

We consider three types of dependence structure between credit levels of the CMs.
\begin{enumerate}[]
  \item $\quad $ Type I, in which the marginal (individual) migration processes of all members are all independent of each other.
  \item $\quad $ Type II, in which any member's jump to default prohibits the other members' credit upgrade.

  \item $\quad $ Type III, in which either credit ratings of all members simultaneously migrate in the same way, or the credit rating of one member migrates and the credit ratings  of all the remaining members stay put.
\end{enumerate}
We  consider only two cases for the CMs' credit ratings at the current time $t_k$: either all the CMs have the highest credit rating of 1, or all the CMs  have the credit rating of 7. We will denote these two cases as $\{\mathbf{1}\}$ and $\{\mathbf{7}\}$.

The individual $\gls{DFi}$s and the total DF are primarily determined by the size of the long positions, which determine the exposure, regardless of whether the portfolio is balanced or not. This confirmed by our simulation results. Accordingly, we will only present the numerical results for the balanced portfolio.

One the major goals of this part of our numerical study is to examine the impact that various values of the risk levels $\alpha_{\textrm{IM}}$ and  $\beta_{\textrm{DF}}$ have on the sizes of the \gls{DFi}s and thus on the size of the total DF.  A sample of the results is presented in Figures \ref{fig:barBTypeII} and \ref{fig:DFratio}. The former one shows the DF/IM ratio for various values of $\alpha_{\textrm{IM}}$ and  $\beta_{\textrm{DF}}$ and for two different initial configurations of credit ratings, and the latter one displays the DF/IM curves for the representative value of $\alpha_{\textrm{IM}}=0.01$, as the function of $\beta_{\textrm{DF}}$.

In Figure~\ref{fig:barBTypeII} we consider only Type~II dependence structure, since analogous results hold for Type~I and Type~III dependence structure. This observation is further confirmed in the graphs shown in Figure \ref{fig:DFratio}.

Figure~\ref{fig:barBTypeII} nicely illustrates the intuitive feature that there is no monotonicity in the way that the  DF/IM ratio depends on $\alpha_{\textrm{IM}}$. It also illustrates another intuitive feature that this ratio decreases in $\beta_{\textrm{DF}}$.

In order to interpret the results in Figure~\ref{fig:DFratio}, we first note that, as expected, there is a significant difference in DF/IM (and thus in total DF amount) between the two initial configurations of credit ratings. For $\beta_{\textrm{DF}} = 0.01$, and initial state $\{\mathbf{1}\}$ we have type III DF/IM =0.0026, while for initial state $\{\mathbf{7}\}$, we have  type III DF/IM = 0.5312. In general the DF/IM ratio for the initial configuration $\{\mathbf{7}\}$ is more than 200 times larger than for $\{\mathbf{1}\}$. We also note that the value DF/IM=10\%, which is often reported by CCPs, is roughly achieved for  $\beta_{\textrm{DF}}=0.05$ as of initial state $\{\mathbf{7}\}$. However, this is achieved for any $\beta_{\textrm{DF}}$ when starting with $\{\mathbf{1}\}$.

Our simulation results also show that, as expected, the \gls{DFi}s increase with the size of the long positions.

As it has been already mentioned in Section~\ref{sec:D-CCP}, so called cover values are computed by CCPs. Using our simulated results for the values of DF and $\gls{DFi}$s,  we computed the empirical probabilities that these default funds cover the exposures. Specifically, we computed  the empirical probability that DF covers the exposure of the CM with the largest exposure, the empirical probability that DF covers the exposure of two CMs with the largest exposure, and the empirical probability that DF covers the exposure of all the CMs. We also computed the empirical probabilities of, what we term -- self-covers, that is the empirical probabilities that the exposure of the CM with the largest exposure is covered by this member's individual default fund (Self-Cover 1), and, similarly for Self-Cover 2. The results are reported in Figures \ref{fig:coverRatio}. The results are only presented for the initial configuration of $\{\mathbf{7}\}$; for the initial configuration of $\{\mathbf{1}\}$ these empirical probabilities are almost equal to 1.

As is shown in Figure \ref{fig:DFratioDiffMem}, DF/IM ratio does not vary much with the number of members or the size of the positions holding by the members. This indicates that our model scales appropriately the sizes of both the DF and IM with the increasing number of CMs. On the contrary, as already said in the beginning of this section, the C1/C2 methodology does not scale appropriately the size of DF with the increasing number of CMs: essentially, the C1/C2 approaches significantly underestimates the size of DF.

All in all, our model proves to be very flexible and capable of producing results that are in agreement the expected features of default waterfall.

\begin{figure}[!ht]
    \centering
\begin{subfigure}[t]{0.495\textwidth}
        \centering
        \includegraphics[width=\linewidth]{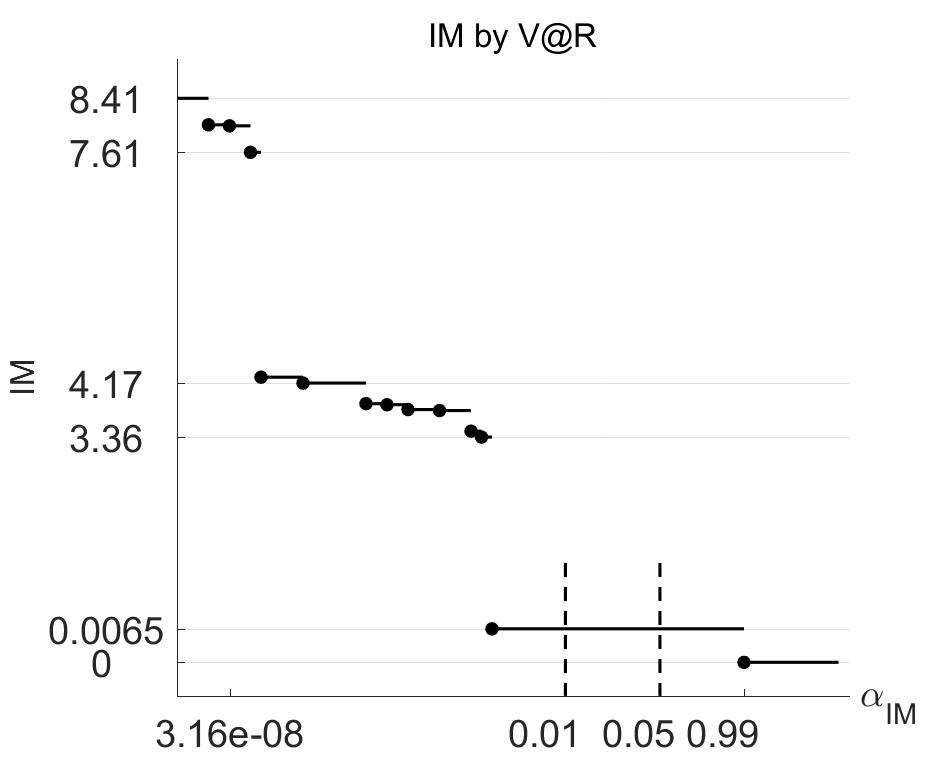}	
\end{subfigure}
\hfill
\begin{subfigure}[t]{0.495\textwidth}
        \centering
        \includegraphics[width=\linewidth]{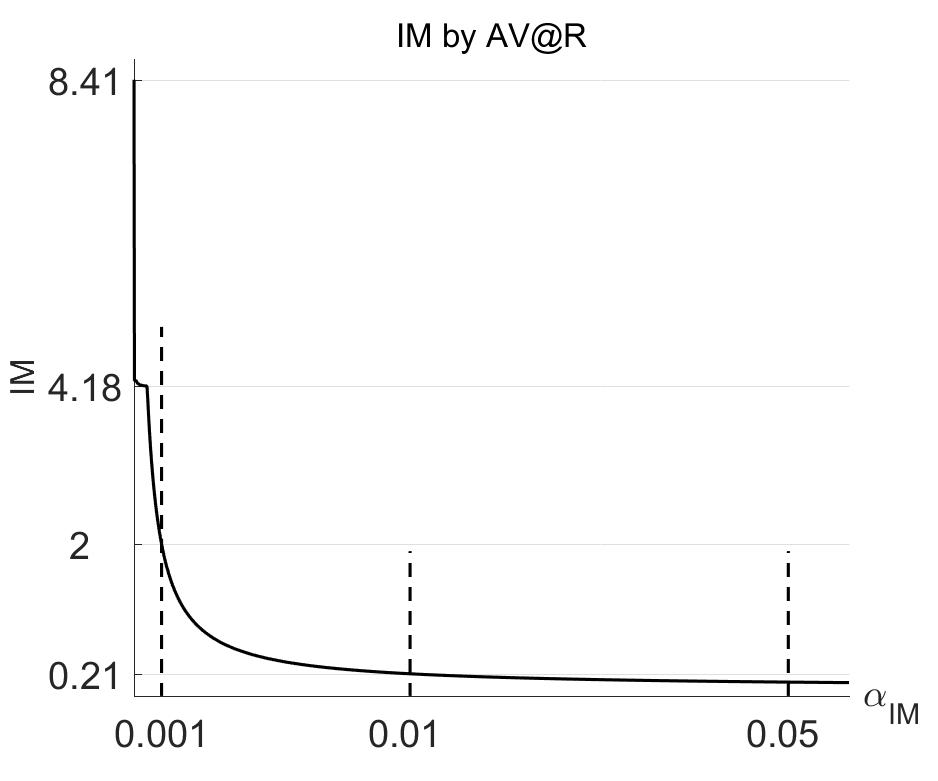}
\end{subfigure}
\caption{IM for $H^1$ computed by $\var$ and $\avar$}\label{fig:var1}
\end{figure}

%%%%%% this block does not compile in arxiv
\begin{figure}%[!ht]%\caption{DF/IM with Various $\alpha_{IM}$ and $\beta_{DF}$}
  	\centering
    \begin{subfigure}[t]{0.495\textwidth}
		\centering
\includegraphics[width=\linewidth]{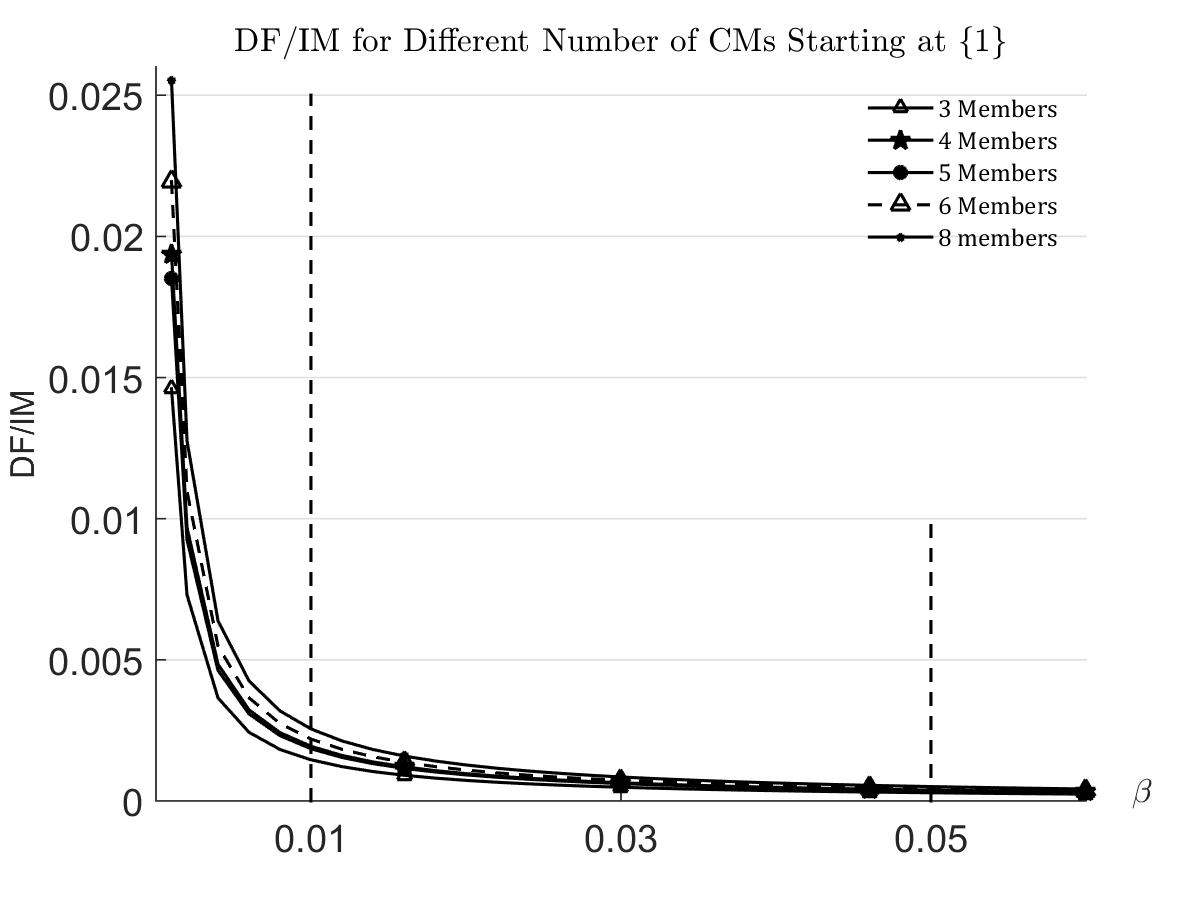}	 %%%%%% WRONG FILE Uncomment above
	\end{subfigure}
	\hfill
	\begin{subfigure}[t]{0.495\textwidth}
		\centering
\includegraphics[width=\linewidth]{DFratioCurveIAS1DiffMem5}	 %%%%%% WRONG FILE Uncomment above
	\end{subfigure}
\caption{Type II DF/IM Ratio}\label{fig:barBTypeII}
  \end{figure}

\begin{figure}[!ht]
	\centering
	\begin{subfigure}[t]{0.495\textwidth}
		\centering
		\includegraphics[width=\linewidth]{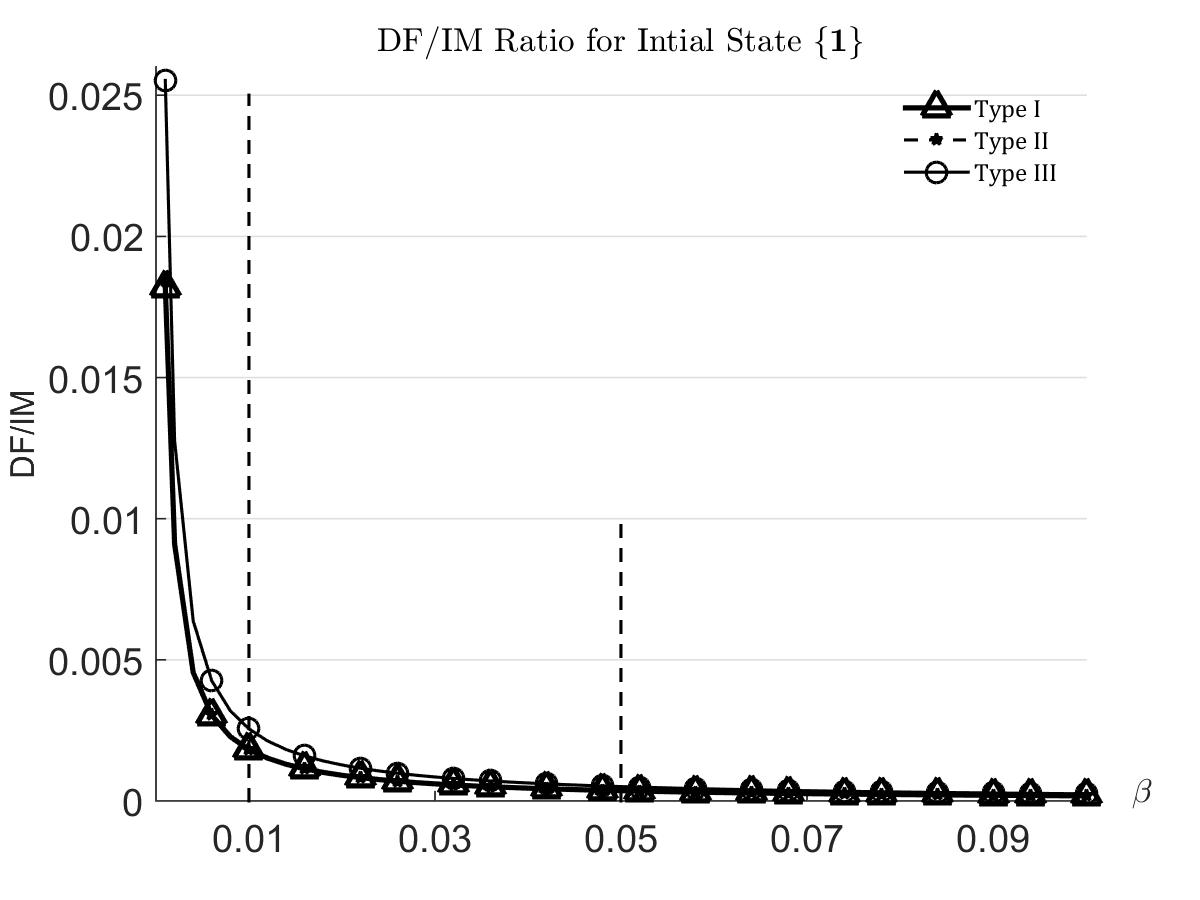}	
	\end{subfigure}
	\hfill
	\begin{subfigure}[t]{0.495\textwidth}
		\centering
		\includegraphics[width=\linewidth]{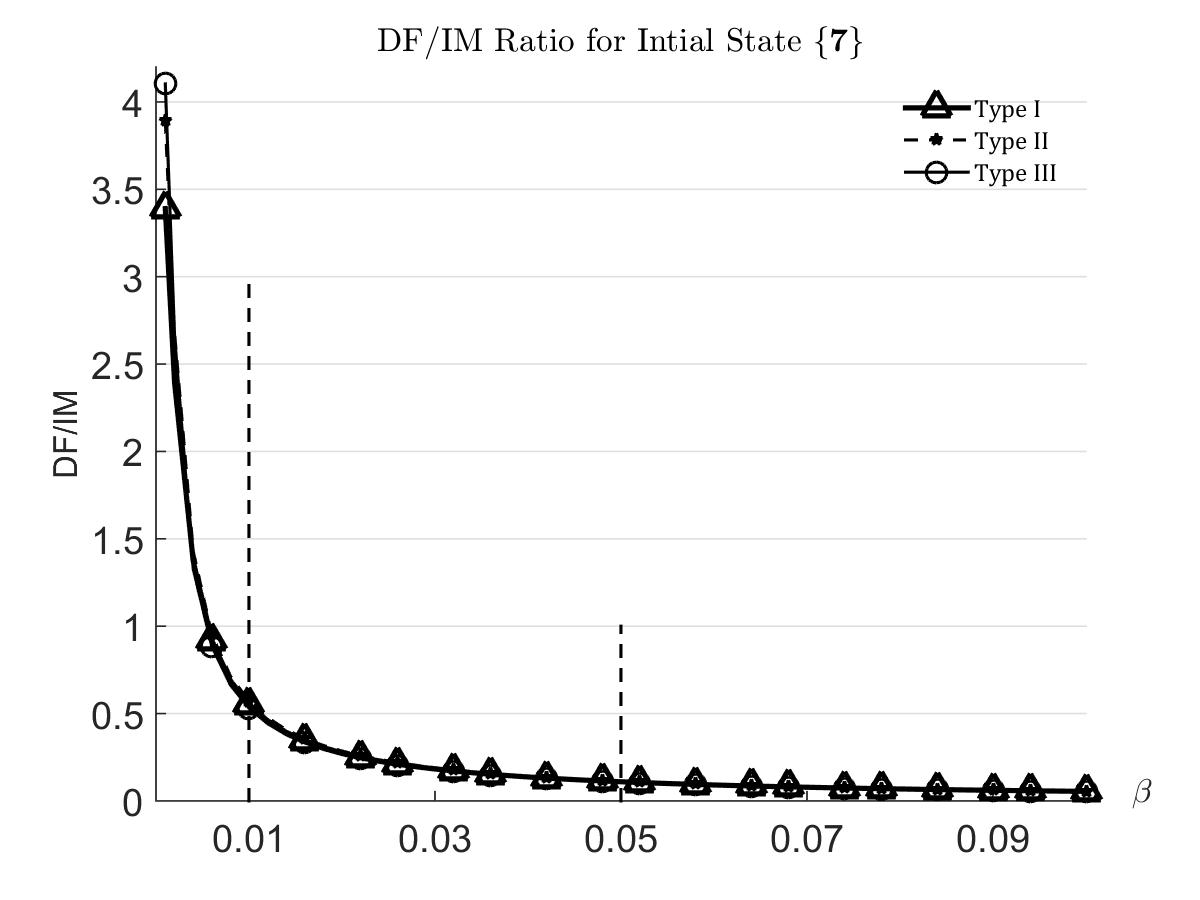}
	\end{subfigure}
\caption{Three Types of DF/IM Ratio for Different $\beta$ with $\alpha=0.01$}\label{fig:DFratio}
\end{figure}

%
%%%%%%%%%%%%%%%%%%% this is a problem too in arxiv
\begin{figure}[!ht]
\centering
	\begin{subfigure}[t]{0.495\textwidth}
		\centering
		\includegraphics[width=\linewidth]{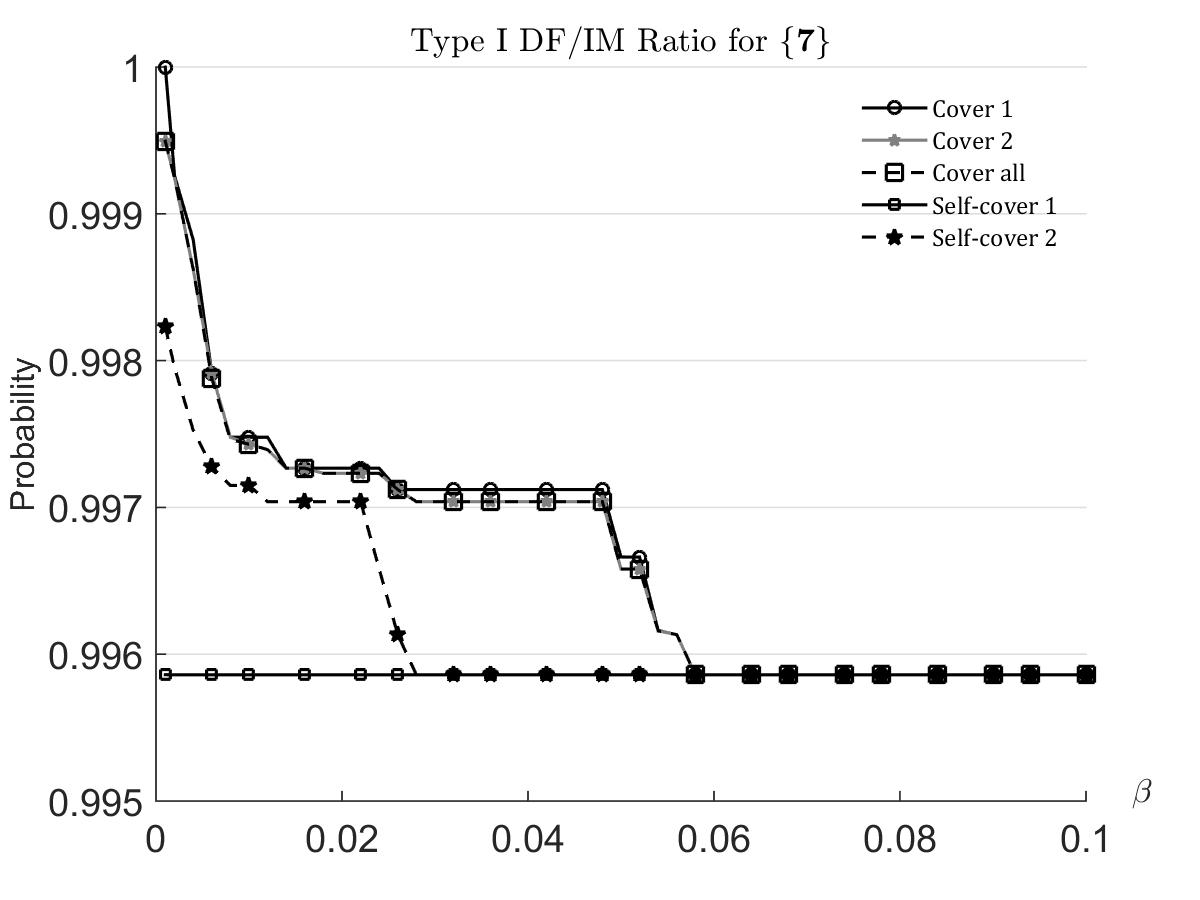}
	\end{subfigure}
	\hfill
	\begin{subfigure}[t]{0.495\textwidth}
		\centering
		\includegraphics[width=\linewidth]{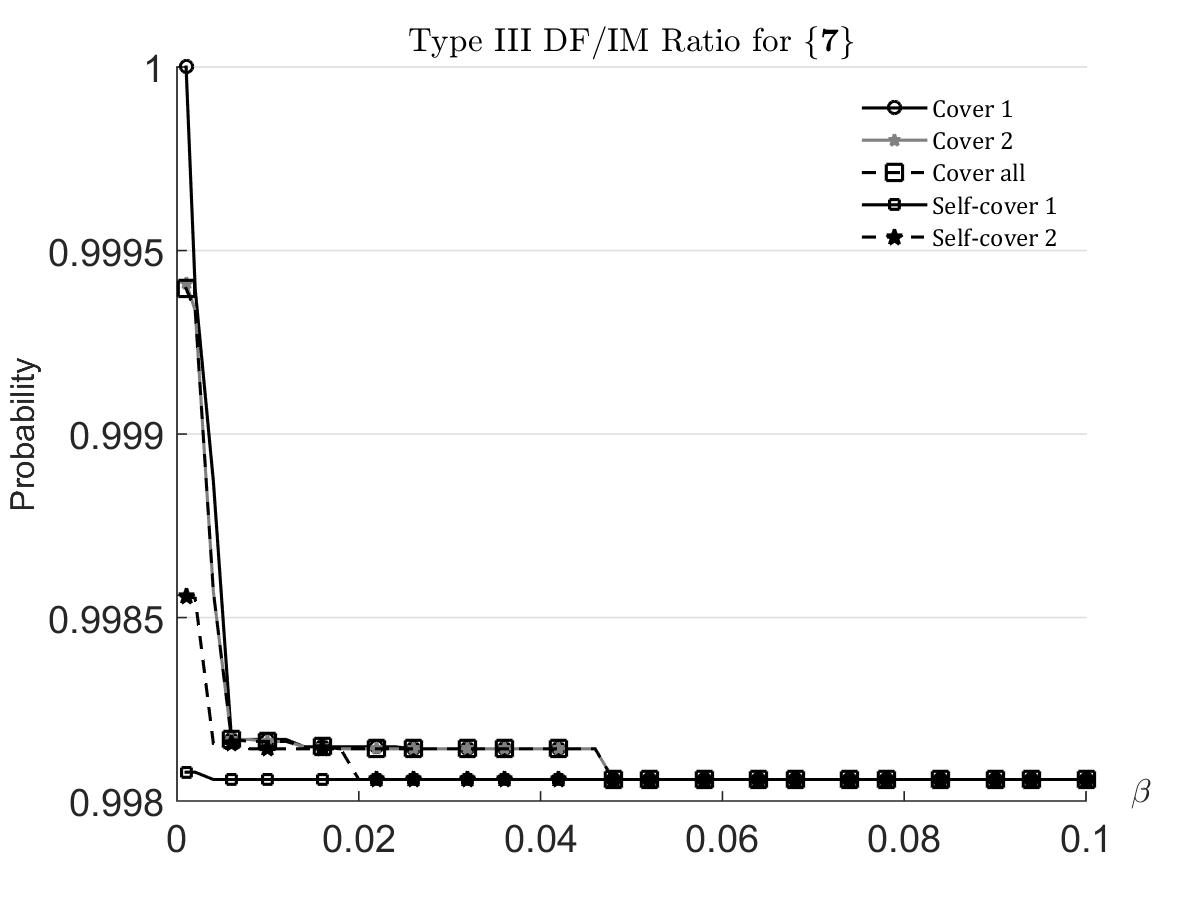}
	\end{subfigure}
	\caption{Type I and III DF Cover Ratio for Initial State $\{\mathbf{7} \}$ }\label{fig:coverRatio}
\end{figure}

\begin{figure}[!ht]
	\centering
	\begin{subfigure}[t]{0.495\textwidth}
		\centering
		\includegraphics[width=\linewidth]{DFratioCurveIAS1DiffMem5}	
	\end{subfigure}
	\hfill
	\begin{subfigure}[t]{0.495\textwidth}
		\centering
		\includegraphics[width=\linewidth]{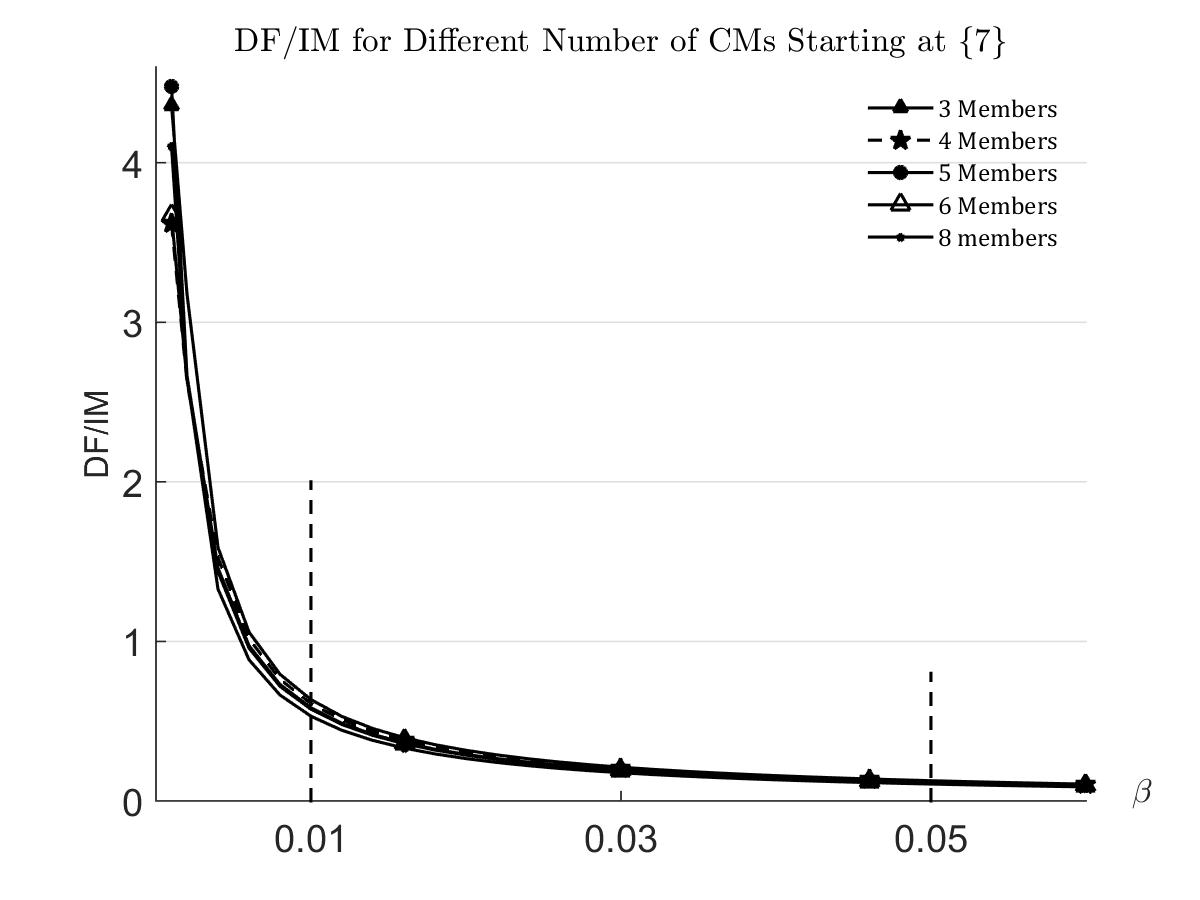}
	\end{subfigure}
	\caption{Type I DF/IM for Different Number of CMs   }\label{fig:DFratioDiffMem}
\end{figure}

%%%%%%%%%%%%%%%%%%%%%%%%%%%%%%%%%%%%%%%%%%%%%%%%%%%%%%%%%%%%%%%%%%%%%%%%%%%%%%%%%%%%%%%%%%%
\begin{appendix}

\section{Conditional Average Value at Risk}\label{sec:avar}
We take  underlying probability space is $(\Omega,\sF,\mathbb{F},P)$, as before.
Recall that the static Average Value at Risk ($\avar$) at the significance level $\alpha \in (0,1]$ is defined as
\begin{equation}\label{eq:defAVAR00}
\avar_\alpha(X):=\frac{1}{\alpha} \int_{0}^{\beta} \var_\beta(X)d\beta,
\end{equation}
for any $X\in L^1(\Omega)$. Following general theory of coherent risk measures, $\avar$ admits the following robust representation
\begin{equation}\label{eq:avarRobust}
  \avar_\alpha(X) = \sup_{Q\in\cQ} E^Q[-X],
\end{equation}
where $\cQ$ is the set of probability measures $Q$ absolutely continuous with respect to $P$, and such that $dQ/dP\leq1/\alpha$.

Let us consider a $\sigma$-algebra $\sG\subset\sF$. In what follows we will use the following notations
\begin{align*}
 \cQ_\sG & = \Set{ Q \prec P \Mid Q=P \textrm{ on } \sG}, \\
  \cQ_\sG^\alpha & = \Set{ Q\in \cQ_\sG \Mid \frac{dQ}{dP}\leq\frac{1}{\alpha}}.
\end{align*}
While the theory of conditional (coherent) dynamic risk measures is a well-established field (c.f.~\cite{DetlefsenScandolo2005,BCDK2013}), we will derive here some specific technical results that are used in this paper.
By analogy to the static case \eqref{eq:avarRobust} (see also \cite[Example~1]{DetlefsenScandolo2005}),  we define the conditional $\avar$ as follows
\begin{equation}\label{eq:condAVARRobust}
  \avar_\alpha(X\mid\sG) := \esssup_{Q\in\cQ_\sG^\alpha} E^Q[-X\mid\sG].
\end{equation}
We will show, in Theorem~\ref{th:maximazer} below,  that there exists a maximizer in \eqref{eq:condAVARRobust}, and we will derive its explicit form. For a similar result in the static case see \cite[Remark~4.48]{FollmerSchiedBook2004}; see also \cite{Cherny2009a} for a similar approach in a dynamic setup.

We start with some definitions and auxiliary results.

Similar to the unconditional case, a \textit{quantile} is an inverse of the conditional cumulative distribution function $F_X(s):=P(X\leq s\mid \sG)$, with $s\in L^0(\sG)$. For a general theory of conditional inverse functions we refer to \cite[Section~A.2]{BCDK2013}.
Note that  $F_X(\,\cdot\,):L^\infty(\sG)\to L^\infty(\sG)$ is increasing, and right-continuous, an hence $G:L^\infty(\sG)\to L^\infty(\sG)$ is an inverse of $F$ if
\begin{equation}\label{eq:defInverse}
F_X^-(G(s))\leq s\leq F(G(s)), \quad s\in L^\infty(\sG),
\end{equation}
where $F^-_X(s):=P(X <  s\mid \sG)$ is the left-continuous version of $F$.

Note that for any $s_1,s_2\in L^\infty(\sG)$, such that $s_1<s_2$, we have
\begin{equation}\label{eq:apend-s1s2}
  F(s_1)\leq F^-(s_2).
\end{equation}

Generally speaking, the function $F$ may have infinitely many inverses, in particular, this may be the case if the random variable $X$ has discrete components. On the other hand, if the $X$ is a continuous random variable, then the inverse $G$ is unique. Among all inverse functions of $F$ we will focus mostly on two of them the left-inverse, and the right-inverse given respectively by
\begin{align}
  F^{-1,-}_X(u) & := \essinf\set{s\in L^0(\sG) \mid F_X(s) \geq u} \label{eq:apenLInvInf}\\
                & = \esssup\set{s\in L^0(\sG) \mid F_X(s) < u},  \label{eq:apenLInvSup} \\
  F^{-1,+}_X(u) & := \essinf\set{s\in L^0(\sG) \mid F_X(s) > u} \label{eq:apenRInvInf}\\
                & = \esssup\set{s\in L^0(\sG) \mid F_X(s) \leq u}. \label{eq:apenRInvSup}
\end{align}
It can be shown (see \cite[Section~A.2]{BCDK2013}) that, $F^{-1,\pm}_X$ are indeed inverses of $F$. Moreover, for any inverse function $G$ of $F$
\begin{equation}\label{eq:inverse2}
  F^{-1,-}\leq G\leq F^{-1,+}.
\end{equation}
and $F^{-1,-}$ is the left-continuous version $G$, and respectively $F^{-1,+}$ is the right-continuous version of $G$.

As mentioned above, a conditional $\alpha$-quantile of a random variable $X$, is defined as the value $G(\alpha)$, with $G$ being an inverse function of the conditional cumulative function $F_X$. The functions $F^{-1,-}(\alpha)$ and $F^{-1,+}(\alpha)$ are called the lower, and respectively, the upper $\alpha$-quantile of $X$. To simplify the notations, we will denote the conditional quantiles by $q_\alpha(X|\sG)$, the lower quantile by $q_\alpha^-(X|\sG)$, and the upper quantile by $q_\alpha^+(X|\sG)$. Sometimes, we will simply write $q^\pm_\alpha(X)$, if no confusions arise. To indicate that the quantile is taken with respect to a measure $P$ we will add the superscript $P$, and  write $q_\alpha^{P}$.

\begin{lemma}\label{lemma:appendQuantile1}
The following representations hold true for the upper and lower conditional quantiles,
\begin{align}
  q^-_\alpha(X\mid\sG) & = \esssup\set{s\in L^0(\sG) \mid P(X<s|\sG) < \alpha},  \label{eq:apenLInvSup2} \\
 q^+_\alpha(X\mid\sG) & =  \esssup\set{s\in L^0(\sG) \mid P(X<s|\sG) \leq \alpha }. \label{eq:apenRInvSup2}
\end{align}
\end{lemma}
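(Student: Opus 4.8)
The plan is to read the two identities directly off the representations already established in \eqref{eq:apenLInvSup} and \eqref{eq:apenRInvSup}: these give $q^-_\alpha(X\mid\sG)=\esssup\set{s\in L^0(\sG)\mid F_X(s)<\alpha}$ and $q^+_\alpha(X\mid\sG)=\esssup\set{s\in L^0(\sG)\mid F_X(s)\le\alpha}$, where $F_X(s)=P(X\le s\mid\sG)$. Hence the whole content of the lemma is that one may replace $F_X$ by its left-continuous version $F_X^-(s)=P(X<s\mid\sG)$ inside these essential suprema without changing their value. I would prove \eqref{eq:apenLInvSup2} in full and then observe that \eqref{eq:apenRInvSup2} follows by the verbatim same argument with every strict constraint ``$<\alpha$'' replaced by ``$\le\alpha$''.

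For the ``$\le$'' direction of \eqref{eq:apenLInvSup2} I would only use $F_X^-\le F_X$ pointwise a.s., which yields the inclusion of families $\set{s\in L^0(\sG)\mid F_X(s)<\alpha}\subseteq\set{s\in L^0(\sG)\mid F_X^-(s)<\alpha}$ and hence the corresponding inequality of essential suprema. The ``$\ge$'' direction is the substantive step. Given any $s\in L^0(\sG)$ with $F_X^-(s)<\alpha$ a.s., I would set $s_n:=s-1/n\in L^0(\sG)$, so that $s_n<s$ a.s., and invoke the conditional monotonicity bound \eqref{eq:apend-s1s2} — which is really just the inclusion $\set{X\le s_n}\subseteq\set{X<s}$ and hence extends from $L^\infty(\sG)$ to $L^0(\sG)$ — to conclude $F_X(s_n)\le F_X^-(s)<\alpha$ a.s. Thus each $s_n$ belongs to the family $\set{t\in L^0(\sG)\mid F_X(t)<\alpha}$, so $s_n\le\esssup\set{t\in L^0(\sG)\mid F_X(t)<\alpha}$ a.s.; letting $n\to\infty$ and using $s_n\uparrow s$ gives $s\le\esssup\set{t\in L^0(\sG)\mid F_X(t)<\alpha}$ a.s. Since $s$ was an arbitrary element of the left-hand family, taking the essential supremum over all such $s$ delivers the desired inequality. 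Combining the two directions proves \eqref{eq:apenLInvSup2}, and the identical reasoning with ``$\le\alpha$'' proves \eqref{eq:apenRInvSup2}.

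I do not expect a genuine conceptual obstacle; the only things that need care are the conditional bookkeeping — ensuring that the elementary fact ``$F(y)\le F^-(x)$ for $y<x$'' is applied correctly to the $\sG$-measurable, possibly unbounded argument $s-1/n$, and that the monotone passage $s_n\uparrow s$ commutes with the essential supremum on the right-hand side — together with the harmless degenerate case in which one or both of the families is empty. The latter needs no separate treatment: the ``$\ge$'' argument above in fact shows that whenever $\set{s\in L^0(\sG)\mid F_X^-(s)<\alpha}$ is nonempty so is $\set{s\in L^0(\sG)\mid F_X(s)<\alpha}$, and symmetrically, so the two essential suprema are either both genuine or both equal to the conventional value on the empty family.
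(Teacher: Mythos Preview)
Your proof is correct and follows essentially the same line as the paper: both start from the already-established representations \eqref{eq:apenLInvSup}/\eqref{eq:apenRInvSup} and handle the nontrivial inequality via the bound $F_X(s_1)\le F_X^-(s_2)$ for $s_1<s_2$ (the paper's \eqref{eq:apend-s1s2}). The only cosmetic difference is that you argue the hard direction directly by the monotone approximation $s_n=s-1/n\uparrow s$, whereas the paper obtains it by contradiction---assuming the two essential suprema differ on a set $A_0$ of positive measure and inserting a strictly intermediate $\sG$-measurable level $b<s_0<s_1$ there; your version is arguably cleaner since it sidesteps the localization bookkeeping on $A_0$.
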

\begin{proof}
We will show only \eqref{eq:apenRInvSup2}, and \eqref{eq:apenLInvSup2} is proved similarly.
Let $a$ and $b$ denote the right hand side of \eqref{eq:apenRInvSup2} and \eqref{eq:apenRInvSup} respectively.
Then, since $P(X<s |\sG) \leq P(X\leq x | \sG)\leq \alpha$, we have that
$$
\Set{s\in L^0(\sG) \mid P(X<s|\sG) < \alpha} \supset \Set{s\in L^0(\sG) \mid P(X<s|\sG) \leq \alpha},
$$
and thus $a\geq b$. Assume that $a>b$ on a set $A_0\in\sG$ such that $P(A_0)>0$.
Then, by the definition of $\esssup$, there exists $s_1\in L^\infty(\sG)$, such that $ b < s_1 $ on $A_0$, and $P(X<s_1|\sG)\leq \alpha$.
Consequently, there exists  $s_0\in L^\infty(\sG)$ such that $b<s_0<s_1$ on $A_0$.
In view of \eqref{eq:apend-s1s2},  we have that
$$
\1_{A_0}P(X\leq s_0 | \sG)\leq 1_{A_0} P(X<s_1|\sG) \leq 1_{A_0}\alpha.
$$
Hence, $1_{A_0}b\geq 1_{A_0} s_0$, which leads to contradiction.
\end{proof}

\begin{remark}\label{remark:append1}
It is useful to note that, by \eqref{eq:defInverse},
$$
0\leq P(q^-_\alpha(X|\sG) < X < q^+_\alpha(X|\sG) \mid \sG) \leq \alpha - \alpha = 0,
$$
and thus
\begin{equation}\label{eq:appendQuant9}
P\left(q^-_\alpha(X|\sG) < X < q^+_\alpha(X|\sG) \mid \sG \right) = 0,
\end{equation}
for  $X\in L^\infty(\sF)$.
\end{remark}

\begin{lemma}\label{lemma:appendquant2}
For any $a,m\in L^\infty(\sG)$, such that $a<0$ such that $X/a\in L^\infty(\sF)$, we have
\begin{align}
a q^\mp_{1-\alpha}(X/a\mid\sG) & = q^\pm_\alpha(X\mid\sG), \label{eq:quant4}\\
q^\pm_\alpha(X-m\mid\sG)  & = q^\pm_\alpha(X\mid\sG)-m.  \label{eq:quant5}
\end{align}
\end{lemma}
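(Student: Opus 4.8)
The plan is to prove both identities directly from the $\esssup$-characterizations of the conditional quantiles established in Lemma~\ref{lemma:appendQuantile1}, namely
\[
q^-_\alpha(Y\mid\sG) = \esssup\set{s\in L^0(\sG) \mid P(Y<s\mid\sG) < \alpha}, \qquad
q^+_\alpha(Y\mid\sG) = \esssup\set{s\in L^0(\sG) \mid P(Y<s\mid\sG) \leq \alpha}.
\]
The two statements are of the same flavor, so I would do \eqref{eq:quant5} first as a warm-up and then \eqref{eq:quant4}, which is the more delicate one because of the order-reversing rescaling.

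For \eqref{eq:quant5}: fix $m\in L^\infty(\sG)$. Since $m$ is $\sG$-measurable, the event $\set{X-m<s}$ equals $\set{X<s+m}$ up to a $P$-null set, so $P(X-m<s\mid\sG) = P(X<s+m\mid\sG)$ for every $s\in L^0(\sG)$. Hence the defining sets satisfy $\set{s : P(X-m<s\mid\sG)\leq\alpha} = \set{s : P(X<s+m\mid\sG)\leq\alpha} = \set{s'-m : P(X<s'\mid\sG)\leq\alpha}$, and translating an $\sG$-measurable set by the $\sG$-measurable random variable $-m$ shifts its essential supremum by $-m$; this gives $q^+_\alpha(X-m\mid\sG) = q^+_\alpha(X\mid\sG)-m$, and the $-$ case is identical with $<\alpha$ in place of $\leq\alpha$.

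For \eqref{eq:quant4}: fix $a\in L^\infty(\sG)$ with $a<0$. The key observation is that, because $a<0$, for any $s\in L^0(\sG)$ the event $\set{X/a < s}$ coincides (up to a null set) with $\set{X > as}$, and therefore $P(X/a<s\mid\sG) = 1 - P(X\leq as\mid\sG)$. To connect this with the defining sets for $q^\pm_\alpha(X\mid\sG)$, which involve strict inequalities $P(X<t\mid\sG)$, I would use the right-/left-continuity of $t\mapsto P(X\leq t\mid\sG)$ and $t\mapsto P(X<t\mid\sG)$ in the conditional sense (the properties \eqref{eq:defInverse}--\eqref{eq:apend-s1s2} together with \eqref{eq:appendQuant9} let one pass between $P(X\leq\cdot)$ and $P(X<\cdot)$ at the relevant quantile values). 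Writing $t=as$ — an order-reversing bijection of $L^0(\sG)$ onto itself — the condition $P(X/a<s\mid\sG)<1-\alpha$ becomes $P(X\leq as\mid\sG)>\alpha$, i.e. $P(X\leq t\mid\sG)>\alpha$, so $\esssup$ over $s$ turns into $a\cdot\essinf$ over $t$ of $\set{t : P(X\leq t\mid\sG)>\alpha}$; that essential infimum is exactly $q^+_\alpha(X\mid\sG) = F^{-1,+}_X(\alpha)$ by \eqref{eq:apenRInvInf} (after replacing $P(X\leq\cdot)$ by $F_X(\cdot)$), and multiplying by $a<0$ flips the $\essinf$ of $t$ back into an $\esssup$ of $s$, yielding $a\,q^-_{1-\alpha}(X/a\mid\sG) = q^+_\alpha(X\mid\sG)$. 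The companion identity $a\,q^+_{1-\alpha}(X/a\mid\sG) = q^-_\alpha(X\mid\sG)$ follows by the same computation with the weak/strict inequalities interchanged.

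The main obstacle is the bookkeeping around strict versus non-strict inequalities under the sign flip: multiplying by the negative $\sG$-measurable $a$ sends $\set{X/a<s}$ to $\set{X>as}=\set{X\le as}^c$, so the "$<$" in the definition of $q^-$ for $X/a$ naturally produces a "$\le$"-complement, i.e. a "$>$", and one must check that this matches the correct ($+$ or $-$) quantile of $X$ — this is precisely where \eqref{eq:appendQuant9} (that $P(q^-_\alpha<X<q^+_\alpha\mid\sG)=0$) and the inverse-function inequalities \eqref{eq:inverse2} are needed to ensure the passage between $\essinf$ and $\esssup$ representations is exact rather than merely an inequality. Everything else is a routine manipulation of essential suprema of $\sG$-measurable families and the observation that $\sG$-measurable affine maps commute with conditional probabilities.
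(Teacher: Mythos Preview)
Your approach is essentially the paper's: start from the $\esssup$-characterizations \eqref{eq:apenLInvSup2}/\eqref{eq:apenRInvSup2}, use that $a<0$ turns $\{X/a<s\}$ into $\{X>as\}$, take complements, and identify the resulting $\essinf$ with \eqref{eq:apenRInvInf}/\eqref{eq:apenLInvInf}. The translation identity \eqref{eq:quant5} is likewise immediate from the definitions, exactly as you outline.

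One remark: the ``main obstacle'' you flag is not actually there. The complement of $\{X>as\}$ is $\{X\leq as\}$, and $P(X\leq t\mid\sG)=F_X(t)$ is \emph{precisely} the function appearing in the $\essinf$-definitions \eqref{eq:apenLInvInf} and \eqref{eq:apenRInvInf}. Concretely, $P(X/a<s\mid\sG)<1-\alpha$ is equivalent to $F_X(as)>\alpha$, so $a\,q^-_{1-\alpha}(X/a\mid\sG)=\essinf\{t:F_X(t)>\alpha\}=q^+_\alpha(X\mid\sG)$ directly by \eqref{eq:apenRInvInf}; and $P(X/a<s\mid\sG)\leq 1-\alpha$ is equivalent to $F_X(as)\geq\alpha$, giving $a\,q^+_{1-\alpha}(X/a\mid\sG)=\essinf\{t:F_X(t)\geq\alpha\}=q^-_\alpha(X\mid\sG)$ by \eqref{eq:apenLInvInf}. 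No passage between $P(X\leq\cdot)$ and $P(X<\cdot)$ on the $X$-side is required, and neither \eqref{eq:appendQuant9} nor \eqref{eq:inverse2} is needed. You can drop that paragraph and the proof becomes as short as the paper's.
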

\begin{proof}
  By \eqref{eq:apenLInvSup2}, we have that
\begin{align}\label{eq:appendquant6}
 a q^{-}_{1-\alpha}(X/a\mid\sG) & = \essinf \Set{as \mid P(X\geq as \mid\sG) > \alpha},
\end{align}
from which, using \eqref{eq:apenRInvInf}, the identity \eqref{eq:quant4}, corresponding to lower quantile on the left hand side, follows at once.
Now, the identity \eqref{eq:apenRInvInf} with upper quantile on left hand sides follows clearly.

Identity \eqref{eq:quant5}  is implied  directly from the definition of the conditional upper quantile.

This concludes the proof.

\end{proof}

\begin{lemma}\label{lemma:E1}
Assume that $Q\in \cQ_\sG$, then
\begin{enumerate}[i)]
  \item  $E[dQ/dP\mid \sG]=1$.

\item   for any $Y\in L^{1}(\sF)$
  \begin{equation}\label{eq:BFormula}
    E^Q[Y\mid \sG ] = E\left[Y \frac{dQ}{dP} \Mid \sG\right],
  \end{equation}
and
\begin{equation}\label{eq:bFormulaInverse}
E[Y \mid \sG] = E[ \1_N Y \mid\sG] + E^Q[\varphi Y \mid \sG],
\end{equation}
where $N = \set{dQ/dP=0}$, $\varphi=dP/dQ$, with the conventions $\varphi=\infty$ if  $dQ/dP=0$, and $0\cdot\infty=0$.
\end{enumerate}
\end{lemma}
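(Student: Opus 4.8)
\textbf{Proof proposal for Lemma~\ref{lemma:E1}.}

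The plan is to prove the two parts in order, using only the defining property of $\cQ_\sG$ (namely $Q\prec P$ and $Q=P$ on $\sG$) together with standard properties of conditional expectations. For part (i), I would argue that for any $A\in\sG$ we have, by the definition of conditional expectation and the fact that $dQ/dP$ is the Radon--Nikodym density,
\[
E[\1_A \, E[dQ/dP\mid\sG]] = E[\1_A \, dQ/dP] = Q(A) = P(A) = E[\1_A],
\]
where the third equality uses $Q=P$ on $\sG$ and $A\in\sG$. Since $E[dQ/dP\mid\sG]$ is $\sG$-measurable and integrates to $P(A)$ against every $\sG$-indicator, it must equal $1$ $P$-a.s.

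For part (ii), identity \eqref{eq:BFormula} is the abstract Bayes formula: I would verify it by checking that the right-hand side is $\sG$-measurable and that, for every $A\in\sG$,
\[
E[\1_A\, E[Y\,dQ/dP\mid\sG]] = E[\1_A\, Y\, dQ/dP] = E^Q[\1_A\, Y] = E^Q[\1_A\, E^Q[Y\mid\sG]],
\]
again invoking $\1_A\,dQ/dP$ as the density and the $\sG$-measurability of $\1_A$; this pins down $E^Q[Y\mid\sG]$. For the inverse formula \eqref{eq:bFormulaInverse}, the idea is to split $Y = \1_N Y + \1_{N^c} Y$ where $N=\{dQ/dP=0\}$, handle the first piece directly, and on $N^c$ write $Y = \varphi Y \cdot (dQ/dP)$ with $\varphi = dP/dQ$, so that \eqref{eq:BFormula} applied to $\varphi Y$ gives $E^Q[\varphi Y\mid\sG] = E[\varphi Y\,(dQ/dP)\mid\sG] = E[\1_{N^c} Y\mid\sG]$; adding back $E[\1_N Y\mid\sG]$ recovers $E[Y\mid\sG]$. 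One subtlety worth a sentence: $P(N)>0$ is possible, but $Q(N)=0$, and on $N$ we use the convention $0\cdot\infty=0$ so that $\varphi Y$ is well defined $Q$-a.s.; one should also note $\varphi Y\in L^1(Q)$ since $E^Q[|\varphi Y|]=E[\1_{N^c}|Y|]\le E[|Y|]<\infty$.

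I do not expect a serious obstacle here; the main thing to be careful about is bookkeeping on the null set $N$ — making sure the conventions $\varphi=\infty$ on $N$ and $0\cdot\infty=0$ are used consistently so that every conditional expectation that appears is of an honestly integrable random variable, and that the decomposition $E[Y\mid\sG]=E[\1_N Y\mid\sG]+E[\1_{N^c}Y\mid\sG]$ is then rewritten in terms of $E^Q$ only on the $N^c$ part. Part (i) is then just the special case $Y\equiv 1$ of the bookkeeping, or can be read off directly as above.
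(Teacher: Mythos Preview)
Your proposal is correct and follows the same overall structure as the paper's proof: part~(i) via $Q=P$ on $\sG$, identity~\eqref{eq:BFormula} via the abstract Bayes formula, and identity~\eqref{eq:bFormulaInverse} via the splitting $Y=\1_N Y+\1_{N^c}Y$ together with the change of measure on $N^c$.

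The one noteworthy difference is that the paper's proof begins with the simplifying assumption that $\sG$ is generated by a countable partition $\cD$ and then verifies each identity atom by atom (computing $E[\,\cdot\mid D]$ for $D\in\cD$). Your argument instead checks the defining property of conditional expectation against arbitrary $A\in\sG$, which works for a general sub-$\sigma$-algebra without that structural assumption. This buys you generality at no real cost; the paper's version is slightly more concrete but formally covers only the countably generated case. Your explicit verification that $\varphi Y\in L^1(Q)$ via $E^Q[|\varphi Y|]=E[\1_{N^c}|Y|]\le E[|Y|]$ is a useful detail the paper leaves implicit.
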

\begin{proof} For simplicity, let us assume that $\sG$ is generated by a countable partition $\cD$. \\
%The general proof is carried by usual technics.
i) Let $D\in\cD$. Then, $P(D)=Q(D)$, and thus
\begin{align*}
  E[dQ/dP | D]  & =  \frac{E[dQ/dP \cdot \1_D]}{P(D)} =  \frac{E[dQ/dP \cdot \1_D]}{Q(D)} \\
  & = \frac{E^Q[dQ/dP \cdot dP/dQ \cdot \1_D]}{Q(D)} = \frac{E^Q[\1_D]}{Q(D)} = 1.
\end{align*}

\noindent
ii)
The equality \eqref{eq:BFormula} follows immediately by i) and the abstract Bayes theorem
\[
E^Q[Y\mid \sG ] = \frac{1}{E\left[\frac{dQ}{dP} \mid \sG\right]}E\left[Y \frac{dQ}{dP} \Mid \sG\right].
\]

For $D\in\cD$, we have
\begin{align*}
  E[Y \mid D] & = \frac{E[Y\1_D]}{P(D)} = \frac{E[Y\1_D \1_N]}{P(D)} + \frac{E[Y\1_{D\cap N^c} ]}{P(D)} \\
   & = E[Y \1_N \mid D] + \frac{\int_{{D\cap N^c}} Y \varphi dQ }{Q(D)} \\
   & = E[Y \1_N \mid D] + \frac{\int_{{D}} Y \varphi dQ }{Q(D)},
   \end{align*}
  where in the last equality we used that $Q(N)=0$. Thus, \eqref{eq:bFormulaInverse} is proved.
\end{proof}

In what follows we fix a $Q\in\cQ_\sG$, and put $N=\set{dQ/dP=0}$, $\varphi:=dP/dQ$, with the convention $\varphi=\infty$ on $N$.

Let us denote by $\cA$ the set of all $\sF$-measurable random variables with values in $[0,1]$.
We denote by $\cA^0$ the set of all random variables $\psi\in\cA$ such that there exists $c\in L^0(\sG)$, $c\geq0$ and
  \begin{equation}\label{eq:Psi0}
    \psi=
    \begin{cases}
      1, & \textrm{on \ } \set{\varphi>c} \\
      0, & \textrm{on \ } \set{\varphi<c}.
    \end{cases}
  \end{equation}

\begin{lemma}\label{lemma:ZstarinA0}
Let $\alpha_0\in L^0(\cG)$, such that $\alpha_0\in(0,1)$, and define
\begin{align*}
c^{\sharp} & := q^Q_{1-\alpha_0}(\varphi \mid\sG), \\
\psi^\sharp & := \1_{\varphi>c^\sharp} + \kappa \1_{\varphi=c^\sharp},
\end{align*}
where
$$
\kappa: =
\begin{cases}
 0 , & \text{if } Q(\varphi=c^\sharp | \sG)=0, \\
 \frac{\alpha_0-Q(\varphi>c^\sharp  |  \sG)}{Q(\varphi=c^\sharp | \sG)}, & \mbox{otherwise}.
\end{cases}
$$
Then,   $\psi^\sharp\in\cA^0$  and $E^Q[\psi^\sharp \mid \sG] = \alpha_0$.
\end{lemma}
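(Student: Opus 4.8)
The plan is to verify the two claimed properties of $\psi^\sharp$ directly from the definitions of the conditional quantile and the conditional expectation, treating the degenerate atom at $c^\sharp$ carefully. Throughout I will work, as in Lemma~\ref{lemma:E1}, on the event that $\sG$ is generated by a countable partition $\cD$ (the general case follows by the usual approximation argument), so that conditioning reduces to conditioning on each atom $D\in\cD$ where $c^\sharp$, $\kappa$, $\alpha_0$ are constants.

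First I would check membership in $\cA^0$. Set $c:=c^\sharp$. On $\set{\varphi>c^\sharp}$ we have $\psi^\sharp=1$ by construction, and on $\set{\varphi<c^\sharp}$ we have $\varphi\neq c^\sharp$ and $\varphi\not>c^\sharp$, so both indicator terms vanish and $\psi^\sharp=0$; hence $\psi^\sharp$ has exactly the form \eqref{eq:Psi0} with the $\sG$-measurable constant $c=c^\sharp\geq0$ (positivity of $c^\sharp$ holds because $\varphi\geq0$ and $\alpha_0<1$, so the $(1-\alpha_0)$-quantile of the nonnegative random variable $\varphi$ under $Q$ is nonnegative). It is also clear that $\psi^\sharp$ takes values in $[0,1]$, since $\kappa\in[0,1]$: indeed $\kappa\geq0$ trivially, and $\kappa\leq1$ because $\alpha_0-Q(\varphi>c^\sharp\mid\sG)\leq Q(\varphi\geq c^\sharp\mid\sG)-Q(\varphi>c^\sharp\mid\sG)=Q(\varphi=c^\sharp\mid\sG)$, where the inequality $Q(\varphi>c^\sharp\mid\sG)\leq\alpha_0\leq Q(\varphi\geq c^\sharp\mid\sG)$ is exactly the defining property of the quantile $c^\sharp=q^Q_{1-\alpha_0}(\varphi\mid\sG)$ read through Lemma~\ref{lemma:appendQuantile1} (applied under $Q$): $q^+_{1-\alpha_0}(\varphi\mid\sG)=\esssup\set{s\mid Q(\varphi<s\mid\sG)\leq1-\alpha_0}$ translates, via $Q(\varphi<s\mid\sG)=1-Q(\varphi\geq s\mid\sG)$, into $Q(\varphi\geq c^\sharp\mid\sG)\geq\alpha_0$ and $Q(\varphi>c^\sharp\mid\sG)\leq\alpha_0$.

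Next I would compute the conditional expectation. Working atomwise and using that $\psi^\sharp=\1_{\varphi>c^\sharp}+\kappa\1_{\varphi=c^\sharp}$,
\begin{equation*}
E^Q[\psi^\sharp\mid\sG]=Q(\varphi>c^\sharp\mid\sG)+\kappa\, Q(\varphi=c^\sharp\mid\sG).
\end{equation*}
If $Q(\varphi=c^\sharp\mid\sG)=0$ then $\kappa=0$ and the identity $E^Q[\psi^\sharp\mid\sG]=\alpha_0$ reduces to $Q(\varphi>c^\sharp\mid\sG)=\alpha_0$; this follows from the two quantile inequalities above together with $Q(\varphi=c^\sharp\mid\sG)=0$, which forces $Q(\varphi\geq c^\sharp\mid\sG)=Q(\varphi>c^\sharp\mid\sG)$ and hence squeezes the common value to $\alpha_0$. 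If $Q(\varphi=c^\sharp\mid\sG)>0$, substituting the definition of $\kappa$ gives
\begin{equation*}
E^Q[\psi^\sharp\mid\sG]=Q(\varphi>c^\sharp\mid\sG)+\frac{\alpha_0-Q(\varphi>c^\sharp\mid\sG)}{Q(\varphi=c^\sharp\mid\sG)}\,Q(\varphi=c^\sharp\mid\sG)=\alpha_0,
\end{equation*}
as required.

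I do not expect a genuine obstacle here — the statement is essentially the conditional analogue of the elementary fact that one can always "randomize at the atom" to make the expectation of a thresholded indicator hit a prescribed level exactly. The one point that needs care, and which I would treat as the main (minor) subtlety, is justifying the two one-sided quantile inequalities $Q(\varphi>c^\sharp\mid\sG)\leq\alpha_0\leq Q(\varphi\geq c^\sharp\mid\sG)$ from the representation in Lemma~\ref{lemma:appendQuantile1}; this is where the left-/right-continuity bookkeeping of the conditional distribution function, and the passage from the countable-partition case to general $\sG$, actually enters, and it is also what guarantees both $\kappa\in[0,1]$ (well-posedness of $\psi^\sharp$ as an element of $\cA$) and the value $\alpha_0$ in the degenerate case.
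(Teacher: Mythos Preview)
Your proposal is correct and follows essentially the same route as the paper: both arguments reduce to the pair of quantile inequalities $Q(\varphi>c^\sharp\mid\sG)\leq\alpha_0\leq Q(\varphi\geq c^\sharp\mid\sG)$, which the paper reads off directly from the defining inverse property \eqref{eq:defInverse} while you extract them from the representation in Lemma~\ref{lemma:appendQuantile1}. The only cosmetic point is that ``$\kappa\geq0$ trivially'' actually rests on the inequality $Q(\varphi>c^\sharp\mid\sG)\leq\alpha_0$ you derive a few lines later, so you may want to reorder the sentence.
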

\begin{proof}
The equality $E^Q[\psi^\sharp \mid \sG] = \alpha_0$ follows directly from the definition of $\psi^\sharp$. It remains to show that $\kappa\in[0,1]$.
Indeed, since $c^\sharp$ is a conditional quantile, by \eqref{eq:defInverse} we deduce
\begin{align*}
 Q(\varphi=c^\sharp | \sG) & = F_\varphi(c^\sharp) - F_\varphi^-(c^\sharp) \\
  & \geq  F_\varphi(c^\sharp) + \alpha_0 - 1\\
  &  = \alpha_0 - Q(\varphi>c^\sharp \mid \sG),
\end{align*}
and thus $\kappa \leq 1$. On the other hand, by \eqref{eq:defInverse}, $Q(\varphi>c^\sharp \mid \sG) = 1-F_\varphi(c^\sharp)\leq 1-(1-\alpha_0) = \alpha_0$, and hence $\kappa\geq 0$.
This completes the proof.
\end{proof}

\begin{lemma}  \label{lemma:NP-Optimizer} \mbox{}
\begin{enumerate}[a)]
  \item For any $\psi^0\in\cA^0$, and $\psi\in\cA$,
\begin{equation}\label{eq:Psi}
  E^Q[\psi\mid \sG]   \leq E^Q[\psi^0\mid \sG] \quad \Rightarrow \quad E^P[\psi\mid\sG] \leq E^P[\psi^0 \mid \sG].
\end{equation}

\item If $\psi^0\in\cA$ satisfies \eqref{eq:Psi} for every $\psi\in\cA$, then $\psi^0\in\cA^0$.
\end{enumerate}
\end{lemma}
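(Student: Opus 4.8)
Both parts are driven by the disintegration identity \eqref{eq:bFormulaInverse} of Lemma~\ref{lemma:E1}, which for $\psi\in\cA$ reads
\[
E^P[\psi\mid\sG] \;=\; E[\1_N\psi\mid\sG] \;+\; E^Q[\varphi\psi\mid\sG],
\]
so it suffices to control the two summands on the right. Recall that $\varphi<\infty$ $Q$-a.s. (since $E^Q[\varphi]=P(N^c)\le 1$) and that, because $Q=P$ on $\sG$, any threshold $c\in L^0(\sG)$ is finite $P$-a.s.

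\emph{Part a).} Fix $\psi^0\in\cA^0$ with threshold $c\in L^0(\sG)$, $c\ge0$, as in \eqref{eq:Psi0}, and let $\psi\in\cA$ with $E^Q[\psi\mid\sG]\le E^Q[\psi^0\mid\sG]$. The Neyman--Pearson pointwise inequality
\[
(\varphi-c)\,(\psi^0-\psi)\;\ge\;0 \qquad Q\text{-a.s.}
\]
is immediate from \eqref{eq:Psi0} by splitting on $\{\varphi>c\}$, $\{\varphi=c\}$, $\{\varphi<c\}$. Taking $\sG$-conditional $Q$-expectations, pulling out the $\sG$-measurable $c\ge0$, and using the hypothesis,
\[
E^Q[\varphi(\psi^0-\psi)\mid\sG]\;\ge\;c\,E^Q[\psi^0-\psi\mid\sG]\;\ge\;0.
\]
For the other summand: $\varphi=\infty$ on $N$ while $c$ is finite, so $N\subseteq\{\varphi>c\}$, hence $\psi^0\equiv1$ on $N$ and $\1_N(\psi^0-\psi)=\1_N(1-\psi)\ge0$, giving $E[\1_N(\psi^0-\psi)\mid\sG]\ge0$. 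Summing and applying the displayed identity yields $E^P[\psi^0-\psi\mid\sG]\ge0$, i.e. \eqref{eq:Psi}.

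\emph{Part b).} Put $\alpha_0:=E^Q[\psi^0\mid\sG]\in L^0(\sG)$, with values in $[0,1]$. On the $\sG$-set $\{0<\alpha_0<1\}$, Lemma~\ref{lemma:ZstarinA0} supplies $\psi^\sharp\in\cA^0$ with threshold $c^\sharp=q^Q_{1-\alpha_0}(\varphi\mid\sG)$ and $E^Q[\psi^\sharp\mid\sG]=\alpha_0=E^Q[\psi^0\mid\sG]$; the degenerate pieces $\{\alpha_0=0\}$ and $\{\alpha_0=1\}$, on which $\psi^0$ is forced to be $Q$-a.s. $0$ and $Q$-a.s. $1$, are treated separately with a degenerate threshold. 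Applying part a) to the pair $(\psi^\sharp,\psi^0)$ gives $E^P[\psi^0\mid\sG]\le E^P[\psi^\sharp\mid\sG]$, and applying the standing hypothesis on $\psi^0$ to $\psi=\psi^\sharp$ gives the reverse inequality, so $E^P[\psi^0\mid\sG]=E^P[\psi^\sharp\mid\sG]$. Re-reading the proof of part a) for $(\psi^\sharp,\psi^0)$, the two nonnegative quantities $E[\1_N(\psi^\sharp-\psi^0)\mid\sG]$ and $E^Q[\varphi(\psi^\sharp-\psi^0)\mid\sG]$ sum to $0$, hence both vanish. Since also $E^Q[(\varphi-c^\sharp)(\psi^\sharp-\psi^0)\mid\sG]=E^Q[\varphi(\psi^\sharp-\psi^0)\mid\sG]-c^\sharp E^Q[\psi^\sharp-\psi^0\mid\sG]=0$ with the integrand $\ge0$ $Q$-a.s., we get $(\varphi-c^\sharp)(\psi^\sharp-\psi^0)=0$ $Q$-a.s.; and from $\1_N(\psi^\sharp-\psi^0)=\1_N(1-\psi^0)=0$ $P$-a.s. we get $\psi^0\equiv1$ on $N$ $P$-a.s. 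Thus $\psi^0=\psi^\sharp$ on $\{\varphi\ne c^\sharp\}$ $Q$-a.s.; since $P\ll Q$ on $N^c$ (as $dQ/dP>0$ there) this also holds $P$-a.s. off $N$, while on $N\subseteq\{\varphi>c^\sharp\}$ we already have $\psi^0=1=\psi^\sharp$. Hence $\psi^0=1$ on $\{\varphi>c^\sharp\}$ and $\psi^0=0$ on $\{\varphi<c^\sharp\}$ $P$-a.s., i.e. $\psi^0\in\cA^0$.

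\emph{Where the difficulty lies.} Part a) is a routine conditional Neyman--Pearson computation. The real work is in part b): extracting the two pointwise identities from the equality of conditional expectations, and especially transferring the resulting $Q$-a.s. statements to $P$-a.s. statements via the mutual absolute continuity of $P$ and $Q$ on $N^c$, together with the bookkeeping needed to check that $c^\sharp$ is an admissible ($\sG$-measurable, a.s. finite) threshold and to dispose of the degenerate cases $\alpha_0\in\{0,1\}$.
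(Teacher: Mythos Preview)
Your proposal is correct and follows essentially the same route as the paper: both parts rest on the disintegration identity \eqref{eq:bFormulaInverse}, the Neyman--Pearson pointwise inequality $(\varphi-c)(\psi^0-\psi)\ge0$, and in part b) the comparison with the $\psi^\sharp$ of Lemma~\ref{lemma:ZstarinA0} to force equality, followed by extracting pointwise identification on $\{\varphi\ne c^\sharp\}$. Your write-up is in fact slightly more explicit than the paper's in justifying the passage from $Q$-a.s.\ to $P$-a.s.\ via the mutual absolute continuity on $N^c$ and in handling the set $N$ separately.
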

\begin{proof}
a) Let $\psi^0\in\cA^0$ and $\psi\in\cA$, then, $\psi^0-\psi\geq0$ on $N$, and $(\psi^0-\psi)(\varphi-c)\geq0$.
Consequently, in view of Lemma~\ref{lemma:E1}.(ii), we have
\begin{align*}
  E^P\left[\psi^0-\psi\Mid \sG\right] & =  E[(\psi^0-\psi) 1_N \mid \sG] + E^Q\left[ \varphi (\psi^0-\psi)\Mid \sG\right] \\
  & \geq cE^Q\left[\psi^0-\psi\Mid \sG\right] \geq0,
\end{align*}
which proves implication \eqref{eq:Psi}.

\smallskip
\noindent
b) Assume that $\psi^*\in\cA$ is such that for any $\psi\in\cA$
\begin{equation}\label{eq:psistar}
  E^Q[\psi\mid \sG]   \leq E^Q[\psi^*\mid \sG] \quad \Rightarrow \quad E^P[\psi\mid\sG] \leq E^P[\psi^* \mid \sG].
\end{equation}
Let $\alpha_0:=E^Q[\psi^*\mid\sG]$. If $\alpha_0=0$ or $\alpha_0=1$, the statement is trivial.

For $0<\alpha_0<1$,  we take $\psi^\sharp\in\cA^0$ as in Lemma~\ref{lemma:ZstarinA0}, for which we have
$$
\alpha_0=E^Q[\psi^*\mid\sG] = E^Q[\psi^\sharp \mid \sG].
$$
By \eqref{eq:Psi}, we get that $E^P[\psi^*\mid\sG] \leq E^P[\psi^\sharp \mid \sG]$, while by \eqref{eq:psistar} we have the opposite inequality, and thus
$$
E^P[\psi^*\mid\sG] = E^P[\psi^\sharp \mid \sG].
$$
Let $g:=\psi^\sharp-\psi^*$. Then, $g\geq 0$ on $N= \set{\varphi=\infty}$, and $g\cdot(\varphi-c^\sharp)\geq0$ $Q$-a.s. Consequently, using \eqref{eq:bFormulaInverse}, we conclude
\begin{align*}
0 & = E^P[g\mid\sG] - c^\sharp E^Q[g\mid\sG] = E^P[g\cdot\1_N\mid\sG] + E^Q[g\cdot(\varphi-c^\sharp)\mid\sG] \\
 & \geq E^Q[g\cdot(\varphi-c^\sharp)\mid\sG]\geq0.
\end{align*}
Thus, $\psi^*=\psi^\sharp$ is $Q$-a.s.  on $\varphi\neq c^\sharp$, and hence $P$-a.s. on $\varphi\neq c^\sharp$.

The general case of $\alpha_0\in[0,1]$, follows from the above cases. We skip the details. An interested reader may contact the authors.
 \end{proof}

Finally, we present the result about the maximizer in the robust representation \eqref{eq:condAVARRobust} of the conditional $\avar$.
\begin{theorem}\label{th:maximazer}
For any $X\in L^\infty(\sF)$,
\begin{equation}\label{eq:condAVARRobust2}
  \avar_\alpha(X\mid\sG) = \esssup \Set{ E[-XZ\mid\sG]\Mid Z\in\sF,\ 0\leq Z\leq 1/\alpha,\ E(Z\mid\sG)=1}.
\end{equation}
Then, a maximizer $Z^*$ in the right hand side of \eqref{eq:condAVARRobust2} exists, and it is given by
\begin{equation}\label{eq:Zstar}
Z^* = \frac{1}{\alpha} \left(\1_{X<q^\pm_\alpha(X\mid \sG)} + \varepsilon \1_{X=q^\pm_\alpha(X\mid \sG)} \right),
\end{equation}
where
$$
\varepsilon =
\begin{cases}
  0, & \mbox{if } P(X=q^\pm_\alpha(X\mid \sG))=0 \\
  \frac{\alpha-P(X<q^\pm_\alpha(X\mid \sG))}{P(X=q^\pm_\alpha(X\mid \sG))}, & \mbox{otherwise}.
\end{cases}
$$
Consequently, a maximizer $Q^*$ in the right hand side of \eqref{eq:condAVARRobust} exists, and it is given by $dQ^*/dP=Z^*.$
\end{theorem}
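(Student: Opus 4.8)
The plan is to prove the theorem in three stages. First I would establish the identity \eqref{eq:condAVARRobust2}, i.e.\ that replacing the essential supremum over $Q\in\cQ_\sG^\alpha$ in \eqref{eq:condAVARRobust} by an essential supremum over densities $Z\in L^\infty(\sF)$ with $0\le Z\le 1/\alpha$ and $E[Z\mid\sG]=1$ gives the same value. One inclusion is immediate: any $Q\in\cQ_\sG^\alpha$ has $Z=dQ/dP$ satisfying $0\le Z\le1/\alpha$, and by Lemma~\ref{lemma:E1}.(i) we have $E[Z\mid\sG]=1$, while $E^Q[-X\mid\sG]=E[-XZ\mid\sG]$ by \eqref{eq:BFormula}. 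For the reverse inclusion, given such a $Z$ one defines $dQ/dP=Z$; then $Q\prec P$, and $E[Z\mid\sG]=1$ forces $Q=P$ on $\sG$ (integrate $Z$ against $\1_D$ for $D$ generating $\sG$), so $Q\in\cQ_\sG^\alpha$. This stage is essentially bookkeeping.

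The second and main stage is to exhibit the maximizer. The strategy is a conditional Neyman--Pearson argument, using the machinery already assembled. Fix $Q\in\cQ_\sG$ and write $N=\{dQ/dP=0\}$, $\varphi=dP/dQ$ as in Lemma~\ref{lemma:E1} and the ensuing discussion. I would first reduce to the case $a<0$ constant, $m\in L^\infty(\sG)$ via Lemma~\ref{lemma:appendquant2}: the transformations $X\mapsto X-m$ and $X\mapsto X/a$ move quantiles in the predictable way, and both sides of \eqref{eq:condAVARRobust2} transform compatibly (cash-additivity and positive homogeneity of $\avar$, together with \eqref{eq:quant4}--\eqref{eq:quant5}), so one may normalize $X$. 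Then the candidate density $Z^*$ in \eqref{eq:Zstar} is, up to the factor $1/\alpha$ and the substitution $\varphi\leftrightarrow X$ with the appropriate sign flip, exactly a random variable of the class $\cA^0$ from \eqref{eq:Psi0}: writing $\psi^*=\alpha Z^*$, the set $\{\psi^*=1\}$ contains $\{X<q^\pm_\alpha(X\mid\sG)\}$ and $\{\psi^*=0\}$ contains $\{X>q^\pm_\alpha(X\mid\sG)\}$, which has the form \eqref{eq:Psi0} with the ``threshold'' given by the quantile. Lemma~\ref{lemma:ZstarinA0} (applied with the roles of $\varphi$ and $X$ interchanged, taking $Q=P$ so that $\varphi=1$ there, or more precisely re-running its proof verbatim with $X$ in place of $\varphi$ and $P$ in place of $Q$) shows $E[\psi^*\mid\sG]=\alpha$, hence $E[Z^*\mid\sG]=1$, and that the weight $\varepsilon\in[0,1]$, so $0\le Z^*\le1/\alpha$; thus $Z^*$ is admissible. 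Optimality then follows from Lemma~\ref{lemma:NP-Optimizer}.(a): for any admissible $Z$, setting $\psi=\alpha Z$ we have $\psi\in\cA$ and $E[\psi\mid\sG]=\alpha=E[\psi^*\mid\sG]$, so the implication \eqref{eq:Psi}, applied with the base measure playing the role of $Q$ and with $-X$ playing the role of $\varphi$ (again re-running the one-line computation: $(\psi^*-\psi)\ge0$ on the super-level set of $-X$ and $\le0$ on the sub-level set, so $(\psi^*-\psi)(-X-q)\ge0$ where $q$ is the quantile), yields $E[-X\psi^*\mid\sG]\ge E[-X\psi\mid\sG]$, i.e.\ $E[-XZ^*\mid\sG]\ge E[-XZ\mid\sG]$. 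Taking the essential supremum over admissible $Z$ gives that $Z^*$ attains \eqref{eq:condAVARRobust2}.

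The third stage is cosmetic: $Q^*$ defined by $dQ^*/dP=Z^*$ lies in $\cQ_\sG^\alpha$ by the argument of stage one, and realizes the essential supremum in \eqref{eq:condAVARRobust}. The step I expect to be the main obstacle is the quantile reduction together with the clean verification that $Z^*$ of \eqref{eq:Zstar} genuinely falls into the scope of Lemmas~\ref{lemma:ZstarinA0} and~\ref{lemma:NP-Optimizer} — those lemmas are phrased in terms of a fixed $Q$ and the likelihood ratio $\varphi=dP/dQ$, whereas the maximizer we want is stated directly in terms of the conditional law of $X$ under $P$; making precise that these are the ``same'' statement (by specializing to $Q=P$ and treating $-X$ as the relevant ordering variable, noting Remark~\ref{remark:append1} to control the boundary set $\{q^-_\alpha<X<q^+_\alpha\}$ on which the choice of upper vs.\ lower quantile is immaterial) is where care is needed, though no new idea beyond what is already in the excerpt is required.
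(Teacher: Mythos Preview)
Your proposal is correct, and the core Neyman--Pearson inequality you sketch in stage two --- that $(\psi^*-\psi)(-X-c)\ge 0$ pointwise with $c=-q^\pm_\alpha(X\mid\sG)$, whence $E[-X\psi^*\mid\sG]\ge E[-X\psi\mid\sG]$ once $E[\psi^*\mid\sG]=E[\psi\mid\sG]=\alpha$ --- is the right mechanism. Your route differs from the paper's, however. The paper does \emph{not} apply the Neyman--Pearson argument directly to $-X$; instead it first restricts to $X<0$ and introduces an auxiliary measure $\widetilde P\sim P$ via $d\widetilde P/dP=X/E[X\mid\sG]$, which recasts maximizing $E[-XZ\mid\sG]$ as maximizing $E^{\widetilde P}[\psi\mid\sG]$ subject to $E^P[\psi\mid\sG]\le\alpha$. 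This is \emph{literally} the setting of Lemmas~\ref{lemma:ZstarinA0} and~\ref{lemma:NP-Optimizer} with $\check Q=P$, $\check P=\widetilde P$, $\varphi=X/E[X\mid\sG]$, and Lemma~\ref{lemma:appendquant2} then translates $q^{P}_{1-\alpha}(\varphi\mid\sG)$ back into $q^\pm_\alpha(X\mid\sG)$; the general $X\in L^\infty$ is recovered by the shift $X\mapsto X-m$. Your direct approach avoids the measure change $\widetilde P$ and the reduction to $X<0$ entirely (so the preliminary normalization you mention via Lemma~\ref{lemma:appendquant2} is in fact unnecessary and can be dropped); the price, which you correctly flag, is that the lemmas do not apply verbatim --- your parenthetical ``taking $Q=P$ so that $\varphi=1$'' makes $\varphi$ constant and the lemma vacuous --- so you must re-run their short proofs with $-X$ as the ordering variable. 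The paper's benefit is that the lemmas are invoked exactly as stated; yours is a cleaner one-shot argument valid for all $X\in L^\infty$ at once.
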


\begin{proof}%[Proof of Theorem~\ref{th:maximazer}]
If $\alpha=1$, then $\cQ_\sG=\set{P}$ and the statement is obvious. In particular, $\{Z\in\sF,\ 0\leq Z\leq 1/\alpha,\ E(Z|\sG)=1\}=\{Z\in\sF, \ Z=1, P-a.s.\}$

Next, assume that $\alpha\in(0,1)$. First, we observe that \eqref{eq:condAVARRobust2} follows from \eqref{eq:condAVARRobust} and \eqref{eq:BFormula}.

In order to prove \eqref{eq:Zstar} we first assume that $X<0$. We consider the measure $\widetilde{P}\sim P$, such that
$d\widetilde{P}/dP = X/E[X|\sG]$. Note that since $E[X/E[X|\sG] \mid \sG]=1$,  and $\widetilde{P}\in\cQ_\sG$.

For any $Q\in\cQ_\sG$, in view of \eqref{eq:BFormula} from Lemma~\ref{lemma:E1}(ii), we  have, upon letting $Z=dQ/dP$,
\begin{align*}
  E^Q[X\mid\sG]  & =  E[XZ\mid \sG] = E\left[\frac{X}{E[X\mid\sG]} \ E[X\mid\sG] Z \mid \sG \right] \\
  & =  E[X\mid\sG] \cdot E\left[\frac{X}{E[X\mid\sG]} Z\mid \sG\right] = E[X\mid\sG] \cdot E^{\widetilde{P}} \left[Z\mid\sG\right].
\end{align*}
Consequently, in view of \eqref{eq:condAVARRobust2}, we see that
\begin{align}
\avar_\alpha(X\mid\sG) & =\frac{E(-X\mid \sG)}{\alpha}\esssup \Set{ E^{\widetilde P}[Z\mid\sG]\Mid Z\in\sF,\ 0\leq Z\leq \frac{1}{\alpha},\ E(Z|\sG)=1}  \nonumber \\
& = \frac{E(-X\mid \sG)}{\alpha}\esssup \Set{ E^{\widetilde P}[\psi\mid\sG]\Mid \psi\in\sF,\ 0\leq \psi\leq 1,\ E(\psi|\sG) = \alpha} \nonumber \\
& = \frac{E(-X\mid \sG)}{\alpha}\esssup \Set{ E^{\widetilde P}[\psi\mid\sG]\Mid \psi\in\sF,\ 0\leq \psi\leq 1,\ E(\psi|\sG)\leq\alpha} . \label{eq:ap5}
\end{align}
Next, we apply Lemma~\ref{lemma:NP-Optimizer}, and for clarity, we denote the probability measures $Q,P$ from this lemma by $\check{Q}$ and $\check{P}$.  respectively. By taking $\check Q=P$, $\check P=\widetilde{P}$, and $\psi^0=\psi^\sharp$ with $\psi^\sharp$ given in Lemma~\ref{lemma:ZstarinA0}, we have that
$$
E^{\check{P}}[\psi \mid \sG] \leq E^{\check{P}}[\psi^\sharp \mid \sG],
$$
for any $\psi\in L^0(\sG),$  such that $0\leq \psi\leq 1$, and $E^{\check{Q}}[\psi|\sG]\leq E^{\check{Q}}[\psi^\sharp| \sG] = \alpha$.
Thus, the $\esssup$ in \eqref{eq:ap5} is attained at $\psi^\sharp$, and consequently we get
\begin{align*}
\avar_\alpha(X\mid\sG) & = \frac{E(-X\mid \sG)}{\alpha} E^{\widetilde{P}}[\psi^\sharp \mid \sG]\\
        & = \frac{1}{\alpha} E[-X\psi^\sharp \mid \sG].
\end{align*}
By Lemma~\ref{lemma:appendquant2}, with  $\varphi=\check{P}/\check{Q} = d\widetilde{P}/dP$, and $c^\sharp=q_{1-\alpha}^{P,\pm}(\varphi\mid\sG)$
$$
E^P[X \mid\sG] c^\sharp= q_{\alpha}^\mp(X|\sG).
$$
Consequently, $\psi^\sharp/\alpha=Z^*$.

Finally, we will show that theorem holds true for an arbitrary $X\in L^\infty(\sF)$. For a fixed $X\in L^\infty(\sF)$, we consider $m\in\bR$ such that  $X'=X-m<0$.
Thus,  by applying to $X'$ the above result, using \eqref{eq:quant5} and cash-additivity of $\avar$, and we deduce
\begin{align*}
  \avar_\alpha(X\mid\sG) & = \avar_\alpha(X-m\mid \sG) - m  \\
 & = \frac{1}{\alpha}E[-(X-m)\big(\1_{X-m<q^\pm_\alpha(X-m|\sG)} + \widehat{\kappa} \1_{X-m<q^\pm_\alpha(X-m|\sG)}\big) \mid\sG] - m,
 \end{align*}
 where
\begin{align*}
\widetilde{\kappa}= \frac{\alpha-P(X-m<q^\pm_\alpha(X-m|\sG))}{P(X-m=q^\pm_\alpha(X-m)|\sG)} =
\frac{\alpha-P(X<q^\pm_\alpha(X|\sG))}{P(X=q^\pm_\alpha(X)|\sG)}= \varepsilon,
\end{align*}
on the set $P(X=q^\pm_\alpha(X)|\sG)\neq0$.
Consequently,
\begin{align*}
  \avar_\alpha(X\mid\sG)  & =  E[-X Z^*  \mid\sG] +\frac{m}{\alpha} E[\1_{X<q^\pm_\alpha(X|\sG)} + \varepsilon\1_{X<q^\pm_\alpha(X|\sG)}\mid\sG] - m \\
  & =  E[-X Z^*  \mid\sG] + mE[Z^*\mid\sG]-m = E[-X Z^*  \mid\sG].
\end{align*}

\end{proof}

\begin{remark}\label{remark:appendAfterTh}
i) Generally speaking, one can prove that the Theorem~\ref{th:maximazer} holds true with $q^\pm_\alpha$ being replaced by any conditional $\alpha$-quantile of $X$, in which case, in view of \eqref{eq:appendQuant9} $\varepsilon=0$ if $q_\alpha\neq q^\pm_\alpha$.

\noindent
ii) Clearly, if $X$ is a continuous random variable, then the maximizer $Z^*$ in Theorem~\ref{th:maximazer} is unique. Generally speaking, the maximizer $Z^*$ is not unique. Nevertheless, the value of $\avar_\alpha(X|\sG)$ is unique, and does not depend on the choice of the maximizer $Z^*$.

\noindent
iii) There are several other representations of conditional $\avar$. In particular, one can show that a representation similar to \eqref{eq:defAVAR00} holds true for conditional case too. This is out of scope of this paper, and we will skip such derivations here.
\end{remark}

\section{Discrete Time Markovian Structure Model of Credit Migrations}\label{sec:MC}

\subsection{Theory}

Let $(\Omega, \sF, P)$ be the underlying (statistical) probability space.
We denote by $\gls{Ri}_t,\ t\in \mathcal{T}$  the credit ratings process of \gls{CMi}, $i\in \cI.$ We assume that process $R^i$ is a time homogeneous Markov chain taking values in the finite state space, say \gls{cRi}, representing possible credit ratings of the $i$th member. Without loss of generality we take $\cR^i=\set{1,\ldots,K^i},$ where $1$ corresponds to the highest (the best) credit rating, and where $K^i$ corresponds to the default state. We assume that the default state $K^i$ is absorbing.  The transition matrix of process $R^i$ is denoted by $\mP^i=[p^i_{x_i,y_i}]_{x_i,y_i\in \cR^i}$. Thus,  $p^i_{x_i,y_i}= P(R^i_{t_1}=y_i \mid R^i_0=x_i)$, for $x_i\neq y_i$.
Typically, the matrix $\mP^i$ can be obtained from the data provided by the rating agencies; see Section \ref{sec:Piestimate} for details.

Now, let us consider a system of $I$ algebraic equations in unknowns $p ^{x}_{y}$, where
$x:= (x_1,x_2,\ldots ,x_I),\ y:= (y_1,y_2,\ldots ,y_I) \in \cR:=  \cR^1\times \cdots \times \cR^I$
and $x\ne y$:

\begin{align}\label{system-1}
p^i_{x_i,y_i}&=\sum_{y_j\in \cR^j,j\ne i}p^{x_1,\ldots,x_i,\ldots,x_I}_{y_1,\ldots,y_i,\ldots,y_I}, \quad \forall x_j \in \cR^j,\, j\ne
i,\quad \forall x_i,\ y_i \in \cR^i,\, x_i\ne y_i, \nonumber \\ i&=1,2,\ldots,I.
\end{align}

Similarly as in \cite{BJN-3}, where the continuous time case is studied,  one can show that the above system admits at least one solution $p^{x}_{y}$ such that, if we define
\begin{equation}\label{diag1}
p ^x_x=1-\sum _{y\in \cR, y\ne x}p ^{x}_{y},
\end{equation}
then the matrix
$$
\mP:= [p^{x}_{y}]_{x,y\in \cR}
$$
is a transition matrix of a Markov chain, say $R=(X^1,\ldots,X^I)$, with state space $\cR,$ and that each component $X^i$ is also Markov chain with transition matrix \gls{mPi}.

The process $R$ describes the joint evolution of credit ratings of all CMs of the CCP, and has the property that its components are Markovian and with the same transition laws as the individual credit migration processes $R^1,\dots,R^I$. If the distribution of $(X^1_0,\ldots,X^I_0)$ is the same as distribution of $(R^1_0,\ldots,R^I_0)$ then, in the terminology of \cite{BJN-2016}, the process $R$ is known as the strong Markovian structure for the processes $R^1,\ldots,R^I$.

In summary, the process $R$ is a time homogenous Markovian model for joint evolution of credit ratings of all CMs of the CCP, subject to the marginal constraints that laws of the components of $R$ match the laws of the individual credit migrations of the CMs.

\begin{remark}
If we insist on continuous time simulation then the equations above change to the  of $I$ algebraic equations in unknowns $\lambda ^{x}_{y}$, where
$x:= (x_1,x_2,\ldots ,x_I),\ y:= (y_1,y_2,\ldots ,y_I) \in \cR:=  \cR^1\times \cdots \times \cR^I$
and $x\ne y$:

\begin{align}\label{system-2}
\lambda^i_{x_i,y_i}&=\sum_{y_j\in \cR^j,j\ne i}\lambda^{x_1,\ldots,x_i,\ldots,x_I}_{y_1,\ldots,y_i,\ldots,y_I}, \quad \forall x_j \in \cR^j,\, j\ne
i,\quad \forall x_i,\ y_i \in \cR^i,\, x_i\ne y_i, \nonumber \\ i&=1,2,\ldots,I,
\end{align}
where $\lambda^i$s represent marginal generators.

As in \cite{BJN-3}, where the continuous time case is studied,  one can show that the above system admits at least one solution $\lambda^{x}_{y}$ such that, if we define
\begin{equation}\label{diag2}
\lambda ^x_x=-\sum _{y\in \cR, y\ne x}\lambda ^{x}_{y},
\end{equation}
then the matrix
$$
\Lambda:= [\lambda^{x}_{y}]_{x,y\in \cR}
$$
is the generator matrix of a Markov chain, say $R=(X^1,\ldots,X^I)$, with state space $\cR,$ and that each component $X^i$ is also Markov chain with generator matrix $\lambda^i$.

\end{remark}

\begin{remark}
These systems may be changed to time inhomogeneous systems.
\end{remark}

\subsubsection{How to solve the above systems}
Brute force solution of the above systems is impossible. With 8 members and with 7 rating categories for each member, the state space of the Markov chain $R$ contains $7^{8}$ elements. So the corresponding transition matrix (or the generator matrix in the continuous time) is a $7^{8}\times 7^{8}$ matrix.

So, the proposed way to proceed is as follows: In case of the discrete time
\begin{enumerate}
\item For $n=0$, solve the system
\begin{align}\label{system-1-n=0}
p^i_{X_i(0),y_i}&=\sum_{y_j\in \cR^j,j\ne i}p^{X_1(0),\ldots,X_i(0),\ldots,X_I(0)}_{y_1,\ldots,y_i,\ldots,y_I}, \quad \forall y_i\in N(X_i(0)),\ p^{X_1(0),\ldots,X_i(0),\ldots,X_I(0)}_{y_1,\ldots,y_i,\ldots,y_I}\in [0,1], \nonumber \\ i&=1,2,\ldots,I,
\end{align}
where $N(X_i(0))$ is the set of (at most two) closest ratings to $X_i(0)$; for $y_i\notin  N(X_i(0))$ set $p^{X_1(0),\ldots,X_i(0),\ldots,X_I(0)}_{y_1,\ldots,y_i,\ldots,y_I}=0.$ Define
\begin{equation}\label{diag1}
p ^{X(0)}_{X(0)}=1-\sum _{y\in \cR, y\ne {X(0)}}p ^{X(0)}_{y},
\end{equation}
and simulate the first transition $X(0)\rightarrow X(1)$ according to $p^{X_1(0),\ldots,X_i(0),\ldots,X_I(0)}_{y_1,\ldots,y_i,\ldots,y_I}$ computed above.
\item For $n=1$, solve the system
\begin{align}\label{system-1-n=0}
p^i_{X_i(1),y_i}&=\sum_{y_j\in \cR^j,j\ne i}p^{X_1(1),\ldots,X_i(1),\ldots,X_I(1)}_{y_1,\ldots,y_i,\ldots,y_I}, \quad \forall y_i\in N(X_i(1)),\ p^{X_1(1),\ldots,X_i(1),\ldots,X_I(1)}_{y_1,\ldots,y_i,\ldots,y_I}\in[0,1] \nonumber \\ i&=1,2,\ldots,I,
\end{align}
where $N(X_i(1))$ is the set of (at most two) closest ratings to $X_i(1)$; for $y_i\notin  N(X_i(1))$ set $p^{X_1(1),\ldots,X_i(1),\ldots,X_I(1)}_{y_1,\ldots,y_i,\ldots,y_I}=0.$ Define
\begin{equation}\label{diag1}
p ^{X(1)}_{X(1)}=1-\sum _{y\in \cR, y\ne X(1)}p ^{X(1)}_{y},
\end{equation}
and simulate the second transition $X(1)\rightarrow X(2)$ according to $p^{X_1(1),\ldots,X_i(1),\ldots,X_I(1)}_{y_1,\ldots,y_i,\ldots,y_I}$ computed above.
\item And so on for $n\geq 2$.
\end{enumerate}

\begin{remark}
In the continuous time set-up proceed analogously, with obvious modifications.
\end{remark}

\subsection{Estimation of \gls{mPi}s}\label{sec:Piestimate}
Denote by \gls{mPiy} the matrix of one year transition probabilities for credit ratings for \gls{CMi}.
Rating agencies typically provide one year transition probabilities for credit ratings for various obligors, and thus we will assume that \gls{mPiy} are known or observed from market data.
Let $m$ be an integer such that $m\delta_f=\textrm{one year}$; e.g. if the fundamental unit of time is one day, we take $m=252$.
The calibration of the transition matrix $\mP^i$ is done by solving for $\mP^i$  the following matrix equation
\begin{equation}\label{calibration}
\left (\mP^i\right )^m= \gls{mPiy},
\end{equation}
subject to the constraint that $\mP^i$ is a stochastic matrix.

\section{CDS modeling}\label{sec:bigbang}
In April of 2009 the market for credit default swap (CDS) contracts went through some fundamental changes with regard to  the contract conventions. The changes were dubbed as ``Big Bang'', and they resulted in standardizing the  CDS market and supporting  the central clearing of CDS contracts.

Before the Big Bang, CDS contracts used to be quoted in terms of a fair spread, which made the present mark to market MtM of the CDS contract\footnote{ That was taken as the difference between the values of the two legs of the contract.} null for both parties in the contract -- the protection seller and protection buyer. After the ``Big Bang ", all CDS spreads (coupons)  have been standardized as either 100 or 500 basis points (bps), resulting in an exchange of upfront payment so to make the value of the contract at initiation equal to zero. Moreover, under the new convention, the CDS contracts can only terminate on March 20, June 20, September 20 and December 20 in a given year.

Currently, according to the CCP industry standards the marking to market means computing the present upfront payment. Thus, the MtM of a CDS contract is equal to the marked to market upfront payment. We adopt this convention here.

We will now briefly describe the way in which we compute the upfront payment. Towards this end we denote by $\lambda$ be the (constant) intensity of the default time $\phi$ of the reference name underlying a CDS contract. In addition we denote the CDS spread (coupon) as $\kappa$, and the constant recovery rate as $R$. For simplicity, we assume that the discount factor $\beta$ is one.
Given all this, the only relevant flow of information regarding the given CDS contract is the is the natural filtration generated by the default indicator process $H_t=\1_{\phi\leq t}$, $t\geq 0$. We denote this filtration as $\mathbb{H}=(\mathcal{H}_t,\ t\geq 0).$

Then  the upfront payment at time $t\geq 0$, say $S_t$, based on the notional value of the contract equal to 1, is computed as follows:
\begin{align}\label{eq:CDS}
	S_t&=E[R\1_{t<\phi\le T}-(T\wedge \phi -t)\kappa|H_t],\\
	&=\1_{\phi>t}\widetilde{S_t},
\end{align}
where the so called pre-default up-front payment $\widetilde{S_t}$ is
\begin{equation}\label{eq:CDSpre}
\widetilde{S}_t=(e^{-\lambda(T-t)}-1)\frac{\kappa-\lambda R}{\lambda}.
\end{equation}
\end{appendix}

%%%%%%%%%%%%%%%%%%%%%%%%%%%%%%%%%%%%%%%%%%%%%%%%%%%%%%
\section*{Acknowledgments}
Part of the research was performed while Igor Cialenco was visiting the Institute for Pure and Applied Mathematics (IPAM), which is supported by the National Science Foundation. The authors are thankful to Samim Ghamami for useful discussions and helpful remarks.

{\small
\bibliographystyle{alpha}
%\bibliography{MathFinanceMaster-10-06-2017}
\newcommand{\etalchar}[1]{$^{#1}$}

}
\newpage
{\footnotesize
%\section{Glossary}
\printglossaries
}
\end{document}